\newtheorem{theorem}{Theorem}[section]
\newtheorem{corollary}[theorem]{Corollary}
\newtheorem{lemma}[theorem]{Lemma}
\newtheorem{proposition}[theorem]{Proposition}
\theoremstyle{remark}
\theoremstyle{definition}
\theoremstyle{plain}
\newtheorem{open problem}{Open problem}
\newtheorem{observation}[theorem]{Observation}
\newtheorem{claim}{Claim}
\newenvironment{claimproof}{\begin{proof}\renewcommand{\qedsymbol}{\claimqed}}{\end{proof}\renewcommand{\qedsymbol}{\plainqed}}
\let\plainqed\qedsymbol
\newcommand\restr[2]{{
		\left.\kern-\nulldelimiterspace 
		#1 
		\littletaller 
		\right|_{#2} 
}}
\newcommand{\littletaller}{\mathchoice{\vphantom{\big|}}{}{}{}}
\tikzset{
	circ/.style = {circle,draw,fill,inner sep=1.3pt},
	mcirc/.style = {circle,draw,fill,inner sep=1pt},
	circR/.style = {circle,draw=red,fill=red,text=red,inner sep=1.3pt},
	circb/.style = {circle,draw=blue,fill=blue,text=blue,inner sep=1.1pt},
	circr/.style = {circle,draw=red,fill=red,inner sep=1pt},
	scirc/.style = {circle,draw,fill,inner sep=.8pt},
	invisible/.style = {draw=none,inner sep=0pt,font=\tiny},
	nonedge/.style={decorate,decoration={snake,amplitude=.3mm,segment length=1mm},draw}
}
\DeclareMathOperator{\wdt}{\textit{wdt}}
\DeclareMathOperator{\WW}{\textit{W}}
\title{Perfect phylogenies via the Minimum Uncovering Branching problem: efficiently solvable cases}
\author[1]{Narmina Baghirova}
\author[2]{Esther Galby}
\author[3]{Martin Milani{\v c}}
\affil[1]{Department
of Decision Support and Operations Research, University of Fribourg, Switzerland}
\affil[2]{Department of Computer Science and Engineering, Chalmers University of Technology and University of Gothenburg, Sweden}
\affil[3]{FAMNIT and IAM,  University of Primorska, Koper, Slovenia}
\date{}
\begin{document}
\maketitle

\begin{sloppypar}
\begin{abstract}
In this paper, we present new efficiently solvable cases of the Minimum Uncovering Branching problem, an optimization problem with applications in cancer genomics introduced by Hujdurovi\'{c}, Husić, Milani{\v c}, Rizzi, and Tomescu in 2018. 
The problem involves a family of finite sets, and the goal is to map each non-maximal set to exactly one set that contains it, minimizing the sum of uncovered elements across all sets in the family. 
Hujdurovi\'{c} et al.~formulated the problem in terms of branchings of the digraph formed by the proper set inclusion relation on the input sets and studied the problem complexity based on properties of the corresponding partially ordered set,
in particular, with respect to its height and width, defined respectively as the maximum cardinality of a chain and an antichain. 
They showed that the problem is \textsf{APX}-complete for instances of bounded height and that a constant-factor approximation algorithm exists for instances of bounded width, but left the exact complexity for bounded-width instances open. In this paper, we answer this question by proving that the problem is solvable in polynomial time.
We derive this result by examining the structural properties of optimal solutions and reducing the problem to computing maximum matchings in bipartite graphs and maximum weight antichains in partially ordered sets.
We also introduce a new polynomially computable lower bound and identify another condition for polynomial-time solvability.

\bigskip
\noindent{\bf Keywords:} 
perfect phylogeny, branching, partially ordered set, antichain, width
\bigskip

\noindent{\bf MSC (2020):}  
05C85, 
05C20, 
05C90, 
06A07, 
92D10
\end{abstract}
\end{sloppypar}



\section{Introduction}
\label{introduction}
In this paper, we investigate the so-called \textsc{Minimum Uncovering Branching} (MUB) optimization problem, introduced by Hujdurovi\'{c}, Husi\'c, Milani\v{c}, Rizzi, and Tomescu in \cite{PPVB}.
The input to the problem is a binary matrix $M$ and the feasible solutions are \emph{branchings} (that is, in-forests) in the so-called \emph{containment digraph} $D_M$, a directed acyclic graph (DAG) derived from the input matrix $M$. The MUB problem consists in computing a branching that minimizes the number of so-called uncovered pairs.
(For the sake of readability, we postpone the precise definitions to \Cref{sec:preliminaries}.)
The problem is equivalent to the \textsc{Minimum Conflict-free Row Split} (MCRS) problem studied by Hujdurovi\'{c}, Ka{\v c}ar, Milani{\v c}, Ries, and Tomescu in~\cite{CompAndAlg}, which is a follow-up work to the one of Hajirasouliha and Raphael~\cite{HajRap}.

\subsection*{Motivation: computational phylogenetics}

The MUB problem has applications in the area of cancer genomics, as it relates to the reconstruction of evolutionary histories of tumors.

One of the main assumptions of the clonal theory of cancer is that cancer arises in a tumor after certain mutations accumulate over time.
For better diagnosis and more targeted therapies (see Newburger et al.~\cite{BetterDiag}), it is essential to understand what mutations and alterations in specific genes cause cancer (see, e.g., Nowell~\cite{Nowell}).
The most accurate way to examine a tumor sample is by sampling and sequencing every single cell contained in the sample, which is impossible with current biotechnological methods. 
Another possible solution for overcoming the challenge of reconstructing the evolutionary history of mutations is to make use of a computational approach.
In the field of phylogenetics, an important approach to studying an evolutionary history is given by the notion of a \textit{perfect phylogeny} (which we formally define in \Cref{sec:preliminaries}). 

The perfect phylogeny evolutionary model can be used for reconstructing the evolutionary history of a tumor, since the tumor progression is assumed to satisfy the following two properties of phylogenetic evolution:
\begin{enumerate}
    \item All mutations in the parent cells are passed to the descendants.
    \item A mutation does not occur twice at the same particular site.
\end{enumerate}

In \cite{HajRap}, Hajirasouliha and Raphael proposed the following computational approach.
We are given a collection of samples of the corresponding tumor, and the task is to reconstruct the history of the evolutionary process of the tumor so that the resulting model corresponds to a perfect phylogeny.  
Given the samples and information about the occurrence of the mutations in each sample, we encode the information into a binary matrix $M$, where rows correspond to samples and columns to mutations. 
The $(i,j)$-th entry of matrix $M$ equals $1$ if the mutation $j$ occurs in sample $i$; $0$ otherwise.

Given a binary matrix $M$, the \emph{perfect phylogeny problem} asks whether the matrix corresponds to a perfect phylogeny evolutionary model.
It is a known result that the rows of a binary matrix $M$ correspond to perfect phylogeny if and only if the matrix is conflict-free, where two columns of the matrix are said to be \emph{in conflict} if there exist three rows of the matrix such that when restricted to the two columns, their entries are $(1,1)$, $(1,0)$, and $(0,1)$, respectively (see \cite{conflictfree,conflictfree2} for more details).
Gusfield (see~\cite{PPfromMlintime}) developed a linear-time algorithm that tests whether a given binary matrix is conflict-free and, if this is the case, also constructs a perfect phylogeny.
In practice, however, each tumor sample is a mixture of reads\footnote{In the field of bioinformatics, in DNA sequencing, the term \emph{read} refers to a sequence obtained at the end of the sequencing process and represents a part of the sequence corresponding to the entire genome.} from several tumor types, and thus the corresponding binary matrix $M$ is not conflict-free.

The MCRS problem, mentioned earlier, aims at explaining each row in a subset of the rows of the binary conflict matrix $M$ with a set of rows, in an overall simplest possible way, so that the resulting binary matrix $M'$ is conflict-free and thus corresponds to a perfect phylogeny. 
Informally, the problem says the following: split each row of a given binary matrix $M$ into the bitwise OR of a set of rows so that the resulting matrix corresponds to a perfect phylogeny and has the minimum possible number of rows among all matrices with this property. Intuitively, we want the minimum possible number of rows in the resulting matrix, since when we split rows that contain information about samples and mutations, each split can be seen as ``altering'' the data encoded in the matrix. Thus, minimizing the number of rows, informally speaking, means preserving the original data as much as the perfect phylogeny constraint allows.

Similarly, in the MUB problem, the goal is to find an optimal branching in the derived containment digraph.
Such a branching can then be used to solve the MCRS problem and reconstruct a perfect phylogeny (see~\cite[Theorem 2.1]{PPVB}).

\subsection*{State of the art}
In~\cite{CompAndAlg}, Hujdurovi\'{c} et al.~showed that the MUB problem is \textsf{NP}-hard, identified a polynomially solvable special case, and gave an efficient heuristic algorithm.
Hujdurovi\'{c} et al.~in \cite{PPVB} strengthened the \textsf{NP}-hardness result to an \textsf{APX}-hardness and gave two approximation algorithms for the MUB problem, described in terms of height and width of the binary matrix $M$ -- where these quantities are defined in terms of parameters of the corresponding containment digraph $D_M$. 
In fact, the authors showed a stronger result, namely that the problem is \textsf{APX}-hard already for instances of height $2$. 
Moreover, they proved a min-max result strengthening Dilworth's theorem, a classical result in the theory of partially ordered sets stating that in any finite partially ordered set, the minimum number of chains in a partition of the set equals the maximum size of an antichain (see~\cite{dilworth}).
Using the min-max result, they improved the heuristic algorithm for the MUB problem from~\cite{CompAndAlg} and derived a polynomial-time algorithm to solve a restricted version of the problem.
Using an Integer Linear Programming formulation of the problem, Husi\'{c} et al.~introduced in~\cite{MIPUP} the Minimum Perfect Unmixed Phylogenies (MIPUP) method for reconstructing a tumor evolution.
The method was tested against four well-known tools and shown to be the most accurate (see~\cite{MIPUP} for more details).
Furthermore, Sheu and Wang~\cite{paramComp} gave an \textsf{FPT} (fixed-parameter tractable) algorithm for the MUB problem, parameterized by the number of uncovered pairs with respect to an optimal branching.
(For definitions regarding parameterized algorithms and complexity, we refer to~\cite{MR3380745}.)

\subsection*{Our results}
The \textsf{APX}-hardness of the MUB problem motivates the search for conditions under which the problem becomes solvable in polynomial time.
In particular, while the \textsf{APX}-hardness proof for the MUB problem (see~\cite{PPVB}) constructs instances of height $2$, the complexity of the MUB problem on instances of bounded width has been left open by previous work.
As our main result, we answer this question by showing that for any integer $k\ge 1$, the MUB problem is polynomial-time solvable on instances of width at most $k$.  
This is a significant improvement over the $k$-approximation algorithm for such instances given by Hujdurovi\'{c} et al.~in \cite{PPVB}.
We also introduce a new polynomially computable lower bound for the problem and use it to show that the MUB problem is polynomial-time solvable on instances where the width of the binary matrix $M$ equals to the number of sinks in the containment digraph $D_M$.
Our results are incomparable with the polynomial special case of the (equivalent) MCRS problem identified by Hujdurovi\'{c} et al.~in~\cite{CompAndAlg}, which corresponds to the case when the out-neighborhood of each vertex of the containment digraph has width at most one. 

Our approach to develop a polynomial-time algorithm for the MUB problem on instances of bounded width is based on several ingredients.
First, we establish a property of the set of optimal solutions that implies that the problem can be reduced to the case when there exists an optimal branching whose set of leaves forms a unique maximum antichain in the containment digraph.
Such an antichain can be found in polynomial time by solving an instance of the \textsc{Maximum Weight Antichain} problem in the containment digraph (which in turn reduces to a maximum flow problem), and a solution of the reduced branching problem can be extended to a solution of the original one by invoking a polynomial-time algorithm from~\cite{PPVB} for a restricted version of the problem.
Finally, the main part of the algorithm performs an exhaustive search of certain highly structured parts of an optimal solution, extending each partial solution to a full solution by solving an instance of the bipartite matching problem.

\subsection*{Structure of the paper}
In~\Cref{sec:preliminaries}, we summarize the main definitions and formally introduce the MUB problem.
In \Cref{sec:bound}, we derive a polynomially computable lower bound on the optimal value of the problem.
In \Cref{sec:polsolcases}, we identify two new polynomially solvable cases: one based on the lower bound and our main result, the case of instances of bounded width.
We conclude the paper with a few open problems in  \Cref{sec:conclusion}.

\section{Preliminaries}
\label{sec:preliminaries}

For a positive integer $n$, we denote by $[n]$ the set $\{1,\ldots, n\}$.
All the graphs in this paper are finite but may be directed or undirected.
Given two vertices $u$ and $v$ in a digraph $D$, we say that $v$ is \emph{reachable} from $u$ if $D$ contains a $u,v$-path; otherwise, we say that $v$ is \emph{unreachable} from $u$. 
A \emph{chain} in $D$ is a set $C$ of vertices of $D$ such that, for every two vertices $u,v\in C$, either $v$ is reachable from $u$ or vice versa.
A \emph{chain partition} of a digraph $D$ is a family $P =\{C_1,\ldots, C_p\}$ of vertex-disjoint chains such that every vertex of $D$ is contained in exactly one chain of $P$.
Given a directed acyclic graph (DAG) $D$, an \emph{antichain} in $D$ is a set of vertices that are pairwise unreachable from each other, and the \emph{width} of $D$, denoted $\wdt(D)$, is the maximum cardinality of an antichain. 
For a vertex $v$ in a digraph $D$, we denote by $d_D^-(v)$ and $d_D^+(v)$ the in- and the out-degree of $v$ in $D$, respectively.
A vertex $v\in V$ with $d^-(v) = 0$ is called a \emph{source}, and a vertex $v\in V$ with $d^+(v) = 0$ is called a \emph{sink}.  
For a set $S \subseteq V(D)$, we denote by $D[S]$ the directed subgraph of $D$ induced by vertices in $S$.
A digraph $D=(V,A)$ is \emph{transitive} if for any three vertices $v_1,v_2,v_3$ if arcs $(v_1,v_2),(v_2,v_3) \in A$ then $(v_1,v_3) \in A$. A \emph{partially ordered set (poset)} is a pair $(S,\preceq)$ such that $S$ is a set and $\preceq$ is a reflexive, antisymmetric, and transitive binary relation on $S$; a \emph{strict partially ordered set (strict poset)} is a pair $(S,\prec)$ such that $S$ is a set and $\prec$ is an asymmetric and transitive binary relation on $S$.
In other words, a strict poset is a transitive DAG, and adding a directed loop at each vertex of a transitive DAG yields a poset.

A \emph{comparability graph} is a graph that admits a \emph{transitive orientation}, that is, an assignment of directions to the edges of the graph such that the resulting directed graph is transitive. 

For a DAG $D$ equipped with a vertex weight function $w:V(D)\to \mathbb{Z}_+$ and a set $X\subseteq V(G)$, we denote by $w(X)$ the weight of $X$ with respect to $w$, that is, the quantity $\sum_{x\in X}w(x)$.
The \textsc{Maximum Weight Antichain} problem takes as input a DAG $D$ and a vertex weight function $w:V(D)\to \mathbb{Z}_+$, and the task is to compute an antichain in $D$ with maximum weight with respect to $w$.
As shown by M\"ohring~\cite{Mohring}, the \textsc{Maximum Weight Antichain} problem can be reduced to a minimum flow problem in a derived network, which in turn can be reduced to maximum flow (see~\cite{MR2839932,MR159700}).
Recently, van den Brand et al.~in \cite{maxflowlinear} introduced an almost linear algorithm that computes exact maximum flows and minimum-cost flows on digraphs with $m$ arcs in time $\mathcal{O}(m^{1+o(1)})$.
Combining these results, we obtain the following.

\begin{theorem}\label{thm:max-weight-antichain}
\textsc{Maximum Weight Antichain} can be solved in time $\mathcal{O}(m^{1+o(1)})$, where $m$ is the number of arcs of the input DAG.
\end{theorem}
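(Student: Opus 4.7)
The plan is to chain together the three results cited in the preceding paragraph. First, I would apply M\"ohring's reduction~\cite{Mohring}, which converts a vertex-weighted DAG $D$ into a flow network $N$ with a designated source $s$ and sink $t$ such that the minimum value of a feasible flow in $N$ (with appropriate arc lower bounds and capacities) equals the maximum weight of an antichain of $D$, and from which an optimal antichain can be recovered in linear time. Concretely, this is the Dilworth-style node-splitting construction: replace each $v \in V(D)$ by an arc $v^- \to v^+$ carrying the ``antichain contribution'' of $v$, add arcs $s \to v^-$ and $v^+ \to t$, and, for each arc $(u,v) \in A(D)$, add an arc $u^+ \to v^-$ of infinite capacity. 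This yields a network with $\mathcal{O}(|V(D)| + |A(D)|) = \mathcal{O}(m)$ arcs.

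Next, I would apply the textbook reduction from minimum flow with lower bounds to a single maximum flow computation (see~\cite{MR2839932,MR159700}); this reduction introduces only a constant number of auxiliary vertices and $\mathcal{O}(n+m)$ auxiliary arcs, so the resulting max-flow instance still has $\mathcal{O}(m)$ arcs.

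Finally, I would invoke the algorithm of van den Brand et al.~\cite{maxflowlinear}, which solves the maximum flow problem in time $\mathcal{O}(m^{1+o(1)})$. Composing the three reductions gives an algorithm for \textsc{Maximum Weight Antichain} with the claimed running time.

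The proof is essentially just an assembly of known pieces, so no step is genuinely deep; the only point that deserves a moment of care, and hence the ``main obstacle'', is to verify that the network built in the first step has $\mathcal{O}(m)$ arcs and not $\Theta(n^2)$. This requires working directly with the arcs of $D$ and not with its (possibly dense) reachability relation: one should \emph{not} insert an arc $u^+ \to v^-$ for every pair with $v$ reachable from $u$, since any flow leaving $u^+$ can already reach $v^-$ through the arcs of $D$ that are present in $N$. Once this is settled, correctness and the running time bound follow immediately from the three cited results.
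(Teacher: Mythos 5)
Your proposal is correct and follows essentially the same route as the paper, which obtains the theorem by exactly this three-step chaining: M\"ohring's reduction of \textsc{Maximum Weight Antichain} to a minimum flow problem in a derived network, the standard reduction from minimum flow with lower bounds to maximum flow, and the $\mathcal{O}(m^{1+o(1)})$ maximum flow algorithm of van den Brand et al. Your additional observation---that the network must be built on the arcs of $D$ itself rather than on its (possibly dense) reachability relation, which is sound because every chain of $D$ is contained in the vertex set of some path of $D$---is a correct and worthwhile verification of the $\mathcal{O}(m)$ size bound that the paper leaves implicit.
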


A \emph{rooted tree} is a tree with a distinguished vertex, called the \emph{root} of the tree. 
In computational phylogenetics, a \emph{perfect phylogeny} is a rooted tree representing the evolutionary history of a set of $m$ objects such that:
\begin{enumerate}
	\item The $m$ objects bijectively label the leaves of the tree, representing different samples.
	\item There are $n$ binary characters, each labeling exactly one edge of the tree, and representing mutations that occurred during the evolution of a tumor.
\end{enumerate}
A perfect phylogeny with $m$ objects and $n$ binary characters naturally corresponds to an $m \times n$ binary matrix having objects as rows and characters as its columns.
See \Cref{fig:PPmatrix} for an example of a perfect phylogeny with $m=5$ objects and $n=6$ binary characters representing mutations and the corresponding binary matrix. 

\begin{figure}[h!]
	\centering
	\includegraphics[width=0.7\linewidth]{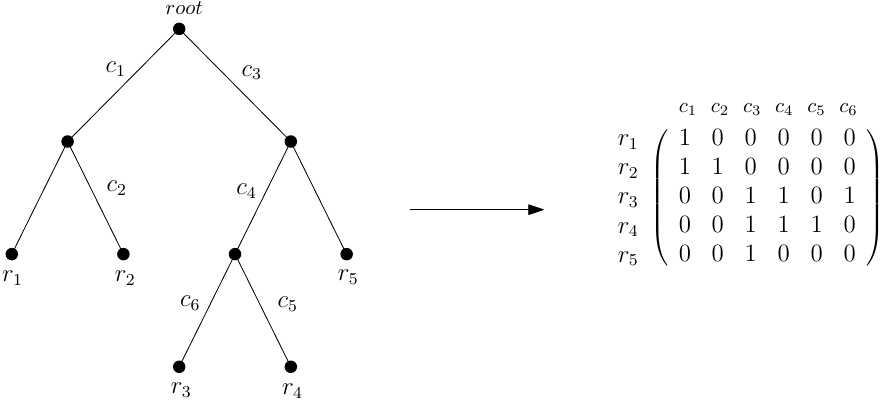}
	\caption{An example of a perfect phylogeny and the corresponding binary matrix $M$. The figure is from \cite{Baghirova}.}
	\label{fig:PPmatrix}
\end{figure}

Recall, following the discussion in the introduction, that we are interested in the opposite direction, not only for binary matrices arising from perfect phylogenies as in \Cref{fig:PPmatrix} but for arbitrary binary matrices. 
More specifically, given a binary matrix containing the information regarding samples and mutations that occurred in those samples, we would like to reconstruct a perfect phylogeny that \emph{best} explains the history of somatic mutations in the corresponding tumor. 
To formalize this notion Hajirasouliha and Raphael introduced in~\cite{HajRap} the so-called \emph{Minimum Split Row} optimization problem, and Hujdurovi\'{c} et al.~in \cite{CompAndAlg} introduced a variant of the problem called \textsc{Minimum Conflict-free Row Split} (MCRS). 
Instead of considering the MCRS problem, we study in this paper a problem equivalent to MCRS, the so-called \emph{Minimum Uncovering Branching} (MUB) problem, introduced by Hujdurovi\'{c} et al.~in~\cite{PPVB}.
We omit the formal definition of the MCRS problem and refer the reader to~\cite{CompAndAlg,PPVB,Baghirova} for details.

\subsection*{The Minimum Uncovering Branching (MUB) problem}
\label{subsec:prelim2}

Let $M\in \{0,1\}^{m \times n}$ be a binary matrix. 
We denote the sets of its columns and rows by $C_M$ and $R_M$, respectively. 
Given a column $c_j \in C_M$, the \emph{support set} of $c_j$ is the set defined as follows: $\textit{supp}_M(c_j) = \{r_i \in R_M \colon M_{i,j}=1\}$, where $M_{i,j}$ represents $(i,j)$-entry of $M$. 
The \emph{containment digraph} $D_M = (V,A)$ corresponding to $M$ has vertex set $V = \{\textit{supp}_M(c)\colon c \in C_M\}$ and an arc set $A = \{(v_i,v_j) \colon v_i, v_j \in V\text{ and }v_i \subset v_j\}$.  
See \cite[Figure 10]{Baghirova} for an example.

Note that the containment digraph corresponding to any binary matrix $M$ is a DAG. 
The containment digraph of a binary matrix $M\in \{0,1\}^{m \times n}$ can be computed in time $\mathcal{O}(n^2m)$, directly from the definition.
Note that the vertices of the containment digraph $D_M$ correspond to the family of support sets of the columns of $M$, and the arcs correspond to the relation of proper inclusion of those sets. 
Hence, the containment digraph represents a (strict) poset $P_M = (V(D_M), \subset)$.
Since we will not consider non-strict partial orders in this paper, we will refer to a strict partially ordered set simply as a \emph{poset}.
For the poset $P_M = (V(D_M), \subset)$ corresponding to the containment digraph $D_M$, an element $m\in P_M$ is \emph{maximal} if $m$ is a sink in $D_M$, that is, there is no element of $P_M$ properly containing $m$.
For a binary matrix $M$, we denote by $\wdt(M)$ the width of the corresponding containment digraph $D_M$, that is, the maximum number of vertices in an antichain in $D_M$. 
For a vertex weight function $w:V(D_M)\to \mathbb{Z}_+$ and a set $X\subseteq V(D_M)$, we denote by $w(X)$ the \emph{weight} of $X$ with respect to $w$, that is, the quantity $\sum_{x\in X}w(x)$.
Moreover, let $\alpha_w(M) := \text{max}\{w(X) \colon X$ is an antichain in $D_M\}$.

Let $M$ be a binary matrix.
A \emph{branching} $B$ of $D_M=(V,A)$ is a subset $B$ of $A$ containing at most one outgoing arc from each vertex. 
Given a vertex $v\in V$ and a set $F\subseteq A$ of arcs of $D_M$, we denote by $d^-_{F}(v)$ and $d^+_{F}(v)$ the in- and the out-degree of $v$ in the digraph $(V,F)$, respectively.
A \emph{leaf} of $B$ is a vertex $u \in V$ such that $d^-_B(u) = 0$. 
For a vertex $v \in V$, we define $N^-_B(v) := \{u\in V \colon (u,v) \in B\}$. 
Given a branching $B$ of $D_M$, a row $r \in R_M$, and a vertex $v \in V$ such that $r \in v$, we say that $r$ is \emph{covered in $v$ with respect to branching $B$} if $r$ is contained in some $u \in \bigcup N^-_B(v)$, that is, there exists some $u\in N^-_B(v)$ such that $r\in u$. 
Otherwise, we say that $r$ is \emph{uncovered} in $v$ with respect to $B$. 
A \emph{$B$-uncovered pair} is a pair $(r,v)$ such that $r\in v$, $r \in R_M$, $v \in V(D_M)$, and $r$ is uncovered in $v$ with respect to $B$.
We denote by $U(B)$ the set of all $B$-uncovered pairs. 
See \cite[Figure 11]{Baghirova} for an example of a branching $B$ and the set of the corresponding uncovered pairs.

Following~\cite{PPVB}, we denote the minimum number of uncovered pairs over all branchings $B$ of $D_M$ by $\beta(M)$. 
We refer to it as the \emph{uncovering number} of $M$. 
We call a branching $B$ with the minimum number of uncovered pairs an \emph{optimal branching}.
The corresponding optimization problem is formalized as follows.

\noindent\rule[0.5ex]{\linewidth}{1pt}
{\sc Minimum Uncovering Branching (MUB):}\\
\noindent\rule[0.5ex]{\linewidth}{0.5pt}
{\em Input:}\quad A binary matrix $M$.\\
{\em Task:} \quad Compute the uncovering number $\beta(M)$.\\
\vspace{-2mm}
\noindent\rule[0.5ex]{\linewidth}{1pt}\\

Let us remark that, except for possibly identical columns of the matrix $M$, this problem is essentially the same as the problem described in the abstract: Given a family $\mathcal{F}$ of finite sets, find a mapping that assigns each non-maximal set in the family to exactly one of the sets containing it, in a way that minimizes the sum, across all sets $X$ in the family, of the elements in $X$ that are not contained in any of the sets mapped to $X$.
Indeed, the family $\mathcal{F}$ of finite sets is the family of the support sets of the columns of $M$ (that is, the vertex set of $D_M$), mappings assigning each non-maximal set in the family to exactly one of the sets containing it correspond bijectively to arc-maximal branchings of $D_M$, and the sum, across all sets $X$ in the family, of the elements in $X$ that are not contained in any of the sets mapped to $X$, is exactly the number $U(B)$ of uncovered pairs for the corresponding branching $B$ of $D_M$.

In \cite{PPVB}, Hujdurovi\'{c} et al.~introduced the concept of linear branching. 
Let $M$ be a binary matrix and $D_M=(V,A)$ be its corresponding containment digraph. 
A branching of $D_M$ is said to be \emph{linear} if it contains at most one incoming arc into each vertex. 
For later use, let us note the following.

\smallskip
\begin{observation}\label{obs:chain-partitions-and-linear-branchings}
To any chain partition $P =\{C_1,\ldots, C_p\}$ of a containment digraph $D_M$, there corresponds a linear branching $B^P$ of $D_M$ consisting of all arcs of the form $(v^i_{j},v^i_{j+1})$ such that $i\in \{1,\ldots, p\}$ and the vertices of the chain $C_i$ are ordered as $\{v_1^i,\ldots,v_{k_i}^i\}$ where $k_i = |C_i|$ so that \hbox{$v^i_1\subset \cdots \subset v_{k_i}^i$}.
Moreover, this correspondence is bijective: for any linear branching $B$ there is a unique chain partition $P$ of $D_M$ such that $B^P= B$.
\end{observation}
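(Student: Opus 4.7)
The plan is to verify the observation by checking both directions of the correspondence directly from the definitions, with the chains-to-linear-branchings direction being straightforward, and the reverse direction requiring an analysis of the structure of a linear branching as a digraph.

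For the forward direction, starting from a chain partition $P=\{C_1,\ldots,C_p\}$, I first need to justify that each chain $C_i$ has a unique linear ordering $v_1^i\subset\cdots\subset v_{k_i}^i$. Since $D_M$ represents the strict poset $(V(D_M),\subset)$, and chains in the DAG correspond exactly to totally ordered subsets under $\subset$, such an ordering exists and is unique. Then the arcs $(v_j^i,v_{j+1}^i)$ all belong to $A$ because $v_j^i\subset v_{j+1}^i$ is a proper inclusion, so by the definition of the containment digraph these are arcs of $D_M$. I would then check the two degree conditions: each vertex $v_j^i$ has exactly one outgoing arc in $B^P$ (namely $(v_j^i,v_{j+1}^i)$) unless it is the top element $v_{k_i}^i$ of its chain, in which case it has none; similarly it has exactly one incoming arc unless it is the bottom element $v_1^i$. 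Hence $B^P$ is both a branching and a linear branching.

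For the reverse direction, let $B$ be any linear branching of $D_M$. Since $d_B^+(v)\le 1$ and $d_B^-(v)\le 1$ for every vertex $v\in V$, the digraph $(V,B)$ is a vertex-disjoint union of directed paths (there can be no directed cycles because $D_M$ is a DAG). Let $P=\{C_1,\ldots,C_p\}$ be the family of vertex sets of these paths, together with isolated vertices, so that $P$ is a partition of $V$. Every $C_i$ is a chain in $D_M$: the arcs of $B$ inside $C_i$ form a Hamiltonian path of $D[C_i]$, and since each arc $(u,v)\in B\subseteq A$ certifies $u\subset v$, transitivity of $\subset$ implies that $C_i$ is totally ordered by $\subset$. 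Thus $P$ is a chain partition, and reading the ordering of each $C_i$ along the path shows $B^P=B$.

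For uniqueness, note that the chain partition constructed above is forced: any chain partition $P'$ with $B^{P'}=B$ must satisfy that the vertices of each chain of $P'$ form the vertex set of a maximal path in $(V,B)$, because the arcs of $B^{P'}$ connect consecutive elements of chains of $P'$. Hence $P'=P$, establishing the bijective correspondence. Since every step is essentially unpacking the definitions, there is no substantive obstacle; the only point worth care is verifying that the two ``at most one'' degree conditions characterizing linear branchings are exactly what is needed to force $(V,B)$ to decompose into disjoint paths, which in turn yields the chain partition.
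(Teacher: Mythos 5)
Your proposal is correct, and it matches what the paper intends: the paper states this as an observation with no written proof, and your argument---unique $\subset$-ordering of each chain for the forward direction, then decomposing $(V,B)$ into vertex-disjoint directed paths using $d^+_B(v)\le 1$, $d^-_B(v)\le 1$ and acyclicity of $D_M$ for surjectivity and uniqueness---is precisely the routine definition-unpacking the authors leave implicit. The one point genuinely needing care, translating the paper's reachability-based definition of chain into ``totally ordered under $\subset$'' via transitivity of $D_M$, is one you handle explicitly, so there is no gap.
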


Given a binary matrix $M$, we denote by $\beta_{\ell}(M)$ its \emph{linear uncovering number}, defined as the minimum number of elements in $U(B)$ over all linear branchings $B$ of $D_M$. 
Then, the \textsc{Minimum Uncovering Linear Branching (MULB)} problem is defined as a problem of, given a binary matrix $M$, computing its linear uncovering number.
As an immediate consequence of the definitions, we obtain the following.

\smallskip
\begin{observation}\label{obs:MULB-MUB}
Every binary matrix $M$ satisfies $\beta(M) \le \beta_{\ell}(M)$.
\end{observation}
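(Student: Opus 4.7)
The plan is to observe that the inequality follows directly from the definitions once we notice that every linear branching is, in particular, a branching.

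More precisely, recall that a branching of $D_M = (V,A)$ is defined as a subset $B \subseteq A$ with $d^+_B(v) \le 1$ for every $v \in V$, while a linear branching additionally imposes $d^-_B(v) \le 1$ for every $v \in V$. Thus the set $\mathcal{L}$ of linear branchings of $D_M$ is a subset of the set $\mathcal{B}$ of all branchings of $D_M$. Since $\beta(M)$ is the minimum of $|U(B)|$ taken over the larger family $\mathcal{B}$, and $\beta_\ell(M)$ is the minimum of the same quantity taken over the smaller family $\mathcal{L}$, we have
\[
\beta(M) \;=\; \min_{B \in \mathcal{B}} |U(B)| \;\le\; \min_{B \in \mathcal{L}} |U(B)| \;=\; \beta_\ell(M),
\]
as required.

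There is no real obstacle here: the statement is a monotonicity of the minimum over a restricted feasible region, and the containment $\mathcal{L} \subseteq \mathcal{B}$ is immediate from comparing the two definitions. The only thing worth flagging is that the family $\mathcal{L}$ is nonempty (so the minimum is well defined); this is clear since the empty set of arcs is a linear branching, and in view of \Cref{obs:chain-partitions-and-linear-branchings} any chain partition of $D_M$ (for instance, the partition into singletons) also yields a linear branching.
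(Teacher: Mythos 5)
Your proof is correct and matches the paper's treatment: the paper states this observation as an immediate consequence of the definitions, and your argument — that every linear branching is a branching, so the minimum of $|U(B)|$ over the larger family of all branchings is at most the minimum over the subfamily of linear branchings — is exactly the intended reasoning, with the nonemptiness remark a harmless bonus.
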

\smallskip

In contrast with the fact that the MUB problem is \textsf{NP}-hard, the MULB problem is solvable in polynomial time (see~\cite{PPVB}).
Thus, the linear uncovering number is a polynomially computable upper bound on the uncovering number.

\section{Lower bounds for the uncovering number}\label{sec:bound}

We now focus on polynomially computable lower bounds on the uncovering number $\beta(M)$, revisiting a known bound and establishing a new one. 
In \Cref{sec:polsolcases}, we will use the latter result to identify an efficiently solvable case.

\begin{sloppypar}
First, we revisit a lower bound due to Hajirasouliha and Raphael~\cite{HajRap}.
Let $M$ be a binary matrix and $r \in R_M$.
The \emph{conflict graph} $G_{M,r}$ is a graph corresponding to matrix $M$ and row $r$ with the following vertex set. 
We associate a vertex with each column of $M$ whose entry in row $r$ equals $1$, and two vertices in $G_{M,r}$ are connected by an edge if and only if the corresponding columns in $M$ are in conflict. 
We say, that $c_i,c_j\in C_M$ are in conflict if there exist three rows $r_p,r_q,r_s \in R_M$ such that the corresponding \( 3 \times 2 \) submatrix of \( M \) has the following form:
\[
M[(r_p,r_q,r_s), (c_i, c_j)] = 
\begin{pmatrix}
1 & 1 \\
1 & 0 \\
0 & 1
\end{pmatrix}.
\]
Given a graph $G$, we denote by $\chi(G)$ its \emph{chromatic number}, that is, the smallest positive integer $k$ such that there exists a mapping from the vertex set of $G$ to a set of $k$ colors such that no two adjacent vertices receive the same color.
The uncovering number of $M$ is bounded from below in terms of chromatic numbers of the graphs $G_{M,r}$ as follows.
\end{sloppypar}

\smallskip
\begin{theorem}[Hajirasouliha and Raphael~\cite{HajRap}]\label{thm:HRLB}
Every binary matrix $M$ satisfies 
\[\beta (M) \ge \sum_{r\in R_M}\chi (G_{M,r})\,.\]
\end{theorem}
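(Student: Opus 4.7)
The plan is to establish the stronger claim that every branching $B$ of $D_M$ satisfies $|U(B)| \ge \sum_{r \in R_M} \chi(G_{M,r})$, so that taking the minimum over all $B$ yields the theorem. For each row $r$ separately I would construct, directly from $B$, a proper coloring of $G_{M,r}$ using at most $|U_r|$ colors, where $U_r := \{v \in V(D_M) : (r,v) \in U(B)\}$.

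Fix a row $r$, set $V_r := \{v \in V(D_M) : r \in v\}$, and define $\phi : V_r \to V_r$ by the following tracing rule: start at $v \in V_r$, and as long as the current vertex $v'$ admits some $u \in N_B^-(v')$ with $r \in u$, move to such a $u$ (breaking ties with any fixed total order on $V(D_M)$); otherwise stop. Acyclicity of $D_M$ guarantees termination, and the endpoint $\phi(v)$ belongs to $U_r$ by construction. Precomposing $\phi$ with the map $V(G_{M,r}) \to V_r$ sending a column $c$ to $\textit{supp}_M(c)$ would be my candidate coloring.

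The crux, and the only real obstacle, is showing that this candidate is a proper coloring of $G_{M,r}$. Here I would use the defining conflict pattern: an edge $c_ic_j$ in $G_{M,r}$ forces $u := \textit{supp}_M(c_i)$ and $v := \textit{supp}_M(c_j)$ to be $\subset$-incomparable. Assuming $\phi(u) = \phi(v) = w$, the tracing procedure yields sequences $u = u_0, \ldots, u_k = w$ and $v = v_0, \ldots, v_\ell = w$ with $(u_{i+1}, u_i), (v_{j+1}, v_j) \in B$ for the relevant indices. The branching condition, namely that each vertex has at most one outgoing arc in $B$, means that moving \emph{up} one step from any vertex is unique, even though the downward tracing involves arbitrary choices. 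Hence the unique $B$-parent of $w$ equals both $u_{k-1}$ and $v_{\ell-1}$; iterating, $u_{k-s} = v_{\ell - s}$ for all $s \le \min(k, \ell)$, which forces $u$ and $v$ to lie on a common root-ward path in $B$ and therefore be $\subset$-comparable, a contradiction.

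Once the coloring is known to be proper, the inequality $\chi(G_{M,r}) \le |U_r|$ follows, and summing over $r \in R_M$ gives $\sum_r \chi(G_{M,r}) \le |U(B)|$; choosing $B$ to be optimal completes the argument. The single delicate point is the ``upward-uniqueness'' property exploited above: although $\phi$ involves many arbitrary choices made while going \emph{down} in $B$, these choices are irrelevant for the purpose of distinguishing colors, since the branching structure determines the walk from any endpoint $w$ back toward its starting point uniquely.
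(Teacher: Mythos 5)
Your proof is correct. One caveat on the comparison: the paper states this theorem as a quoted result of Hajirasouliha and Raphael~\cite{HajRap} and contains no proof of it, so your argument can only be measured against the paper's surrounding machinery, and measured that way it is a genuinely self-contained alternative. The key step checks out: two downward traces ending at the same uncovered vertex $w$ must agree on their overlap, because each vertex has at most one outgoing arc in $B$ and the arcs traversed are exactly the outgoing $B$-arcs of the lower endpoints; hence $\phi^{-1}(w)$ is a chain in $D_{M,r}$. Combined with the observation (which the paper also makes, inside the proof of \Cref{lemma:principalsubgr}) that two columns sharing a $1$ in row $r$ are in conflict exactly when their supports are $\subset$-incomparable, this makes your coloring proper. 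The edge cases are benign: if one trace has length zero, your iteration shows one endpoint's support contains the other's, again contradicting incomparability, and distinct columns with identical supports are nonadjacent in $G_{M,r}$ and may safely share a color; the tie-breaking order is needed only to make $\phi$ a well-defined function, not for properness. It is worth noticing what your construction really produces: the fibers of $\phi$ partition the vertex set of $D_{M,r}$ into at most $|U_r|$ chains, so you directly obtain $\wdt(D_{M,r}) \le |U_r|$ and hence, summing over rows, the recast bound $\beta(M) \ge \WW(M)$ of \Cref{cor:LB}, bypassing conflict graphs altogether; by \Cref{lemma:principalsubgr} that recast form is equivalent to the stated inequality. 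By contrast, the paper reaches \Cref{cor:LB} by combining the quoted theorem with \Cref{lemma:principalsubgr}, and its polynomial computability discussion leans on perfection of conflict graphs; your route needs no perfect-graph theory and extracts an explicit chain decomposition from an arbitrary branching, which is arguably more informative.
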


The lower bound in \Cref{thm:HRLB} can be computed in polynomial time. This is because the graphs $G_{M,r}$ are perfect. 
Indeed, Hujdurovi\'{c} et al.~\cite{CompAndAlg} provided a complete characterization of conflict graphs (that is, graphs of the form $G_{M,r}$, called \emph{row-conflict graphs} therein), proving that a graph is a conflict graph if and only if its complement is a comparability graph.
Both comparability graphs and their complements are subclasses of the well-known class of perfect graphs~\cite{GmrPerfect}.
Since the chromatic number of a perfect graph can be computed in polynomial time, as shown by Grötschel, Lovász, and Schrijver (see~\cite{Grotschel}), it follows that the lower bound in \Cref{thm:HRLB} is computable in polynomial time.

The lower bound given by \Cref{thm:HRLB} can be expressed more directly in terms of subgraphs of the containment digraph $D_M$.
For a row $r$ of $M$, the \emph{principal subgraph of $D_M$ corresponding to $r$} is denoted by $D_{M,r}$ and defined as the subgraph of $D_M$ induced by the set of vertices $v\in D_M$ such that $r\in v$. 
A \emph{principal subgraph} of $D_M$ is any subgraph of the form $D_{M,r}$. 
Recall that the width of a digraph is the maximum size of an antichain in it.
We express the lower bound given by \Cref{thm:HRLB} in terms of subgraphs of the containment digraph $D_M$ by showing that for every binary matrix $M$ and row $r$ of $M$, the chromatic number of the conflict graph $G_{M,r}$ equals the width of the principal subgraph $D_{M,r}$. 
In the proof of this result, we will also need the notion of the \emph{complement} of an undirected graph $G$, which is the graph $\overline{G}$ with the same vertex set as $G$, in which two distinct vertices are adjacent if and only if they are not adjacent in $G$.

\smallskip
\begin{lemma}
	\label{lemma:principalsubgr}
	For every binary matrix $M$ and row $r$ of $M$, we have $\chi(G_{M,r}) = \wdt(D_{M,r})$.
\end{lemma}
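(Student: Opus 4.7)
The plan is to identify $G_{M,r}$ explicitly as the complement of the comparability graph underlying $D_{M,r}$, and then invoke perfectness to equate chromatic number with clique number, which will in turn equal the width. First I would note that, by definition, the vertex sets of the two graphs are in natural bijection: a vertex $c$ of $G_{M,r}$ is a column with $M_{r,c}=1$, and such columns correspond precisely (via $c \mapsto \textit{supp}_M(c)$) to those vertices $v$ of $D_M$ with $r \in v$, that is, to the vertices of the principal subgraph $D_{M,r}$. Throughout the proof, I would freely identify columns with their support sets under this bijection.

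Next I would characterize adjacency in $G_{M,r}$ in poset-theoretic terms. Given two columns $c_i, c_j$ that both contain $r$, the row $r$ itself already witnesses the $(1,1)$-pattern in the definition of conflict. Consequently, $c_i$ and $c_j$ are adjacent in $G_{M,r}$ if and only if there exist rows $r_q \in \textit{supp}_M(c_i) \setminus \textit{supp}_M(c_j)$ and $r_s \in \textit{supp}_M(c_j) \setminus \textit{supp}_M(c_i)$, which is equivalent to saying that $\textit{supp}_M(c_i)$ and $\textit{supp}_M(c_j)$ are incomparable under $\subset$. Hence the edge set of $G_{M,r}$ is exactly the set of incomparable pairs in the poset induced on $V(D_{M,r})$, so $G_{M,r} = \overline{H}$, where $H$ is the comparability graph of that induced poset.

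Under this identification, cliques in $G_{M,r}$ correspond to independent sets in $H$, which are precisely antichains in $D_{M,r}$. Therefore $\omega(G_{M,r}) = \wdt(D_{M,r})$. To conclude, I would use that $H$ is a comparability graph, hence perfect, so by the Perfect Graph Theorem its complement $G_{M,r}$ is perfect as well; this fact is already recorded in the paper in the discussion following \Cref{thm:HRLB}. Perfectness yields $\chi(G_{M,r}) = \omega(G_{M,r})$, and combining with the previous equality gives $\chi(G_{M,r}) = \wdt(D_{M,r})$.

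The only delicate point is the edge characterization, where one must verify that row $r$ may always be used as the $(1,1)$-row witnessing a conflict between two columns that both contain $r$; once this is checked, the remainder is a routine application of standard facts about comparability graphs and perfectness.
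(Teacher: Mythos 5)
Your overall route is essentially the paper's: both arguments observe that, with the common row $r$ serving as the $(1,1)$-witness, two columns containing $r$ are in conflict if and only if their support sets are incomparable under inclusion; both then identify the conflict graph with the complement of the comparability graph underlying $D_{M,r}$, invoke perfectness to replace the chromatic number by the clique number, and note that cliques in the complement are independent sets in the comparability graph, i.e., antichains in $D_{M,r}$. The ``delicate point'' you flag is verified in the paper exactly as you suggest.

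There is, however, one genuine (if small) gap: the map $c \mapsto \textit{supp}_M(c)$ is \emph{not} a bijection from $V(G_{M,r})$ to $V(D_{M,r})$ in general. The vertices of $D_M$ are the support \emph{sets} of the columns, so two identical columns of $M$ (both with a $1$ in row $r$) yield two distinct vertices of $G_{M,r}$ but a single vertex of $D_{M,r}$; your declaration that you will ``freely identify columns with their support sets under this bijection'' is unjustified as stated, and the identification $G_{M,r} = \overline{H}$ is literally false in the presence of duplicate columns --- $G_{M,r}$ is instead obtained from $\overline{H_{M,r}}$ by blowing up vertices into classes of pairwise nonadjacent twins. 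Your edge characterization itself survives (equal supports are comparable, hence nonadjacent), but the vertex-level claim does not. The paper handles this explicitly: it passes to the induced subgraph $\widehat{G}_{M,r}$ keeping one vertex per class of identical columns, argues $\chi(G_{M,r}) = \chi(\widehat{G}_{M,r})$ because duplicated columns give nonadjacent vertices with equal neighborhoods, and only then establishes the isomorphism $\widehat{G}_{M,r} \cong \overline{H_{M,r}}$. Inserting this one reduction (and, if you prefer to argue via clique numbers, noting that such a blow-up leaves $\omega$ unchanged as well, since a clique meets each twin class at most once) repairs your proof, after which it coincides with the paper's.
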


\begin{proof}
Let $M$ be a binary matrix and $r \in R_M$ an arbitrary row. 
Each vertex of $G_{M,r}$ corresponds to some column of $M$.
Note that if two column vectors are identical, then the corresponding vertices are nonadjacent and have the same neighborhood.
Let $\widehat{G}_{M,r}$ be the induced subgraph of $G_{M,r}$ obtained by keeping only one vertex from each maximal set of vertices corresponding to identical column vectors.
Then the chromatic numbers of $G_{M,r}$ and $\widehat{G}_{M,r}$ coincide.
On the one hand, we have $\chi(G_{M,r}) \ge \chi(\widehat{G}_{M,r})$ since $\widehat{G}_{M,r}$ is a subgraph of $G_{M,r}$.
On the other hand, $\chi(G_{M,r}) \le \chi(\widehat{G}_{M,r})$, since a $k$-coloring of $\widehat{G}_{M,r}$ can easily be extended to a $k$-coloring of $G_{M,r}$ by coloring each vertex of  $G_{M,r}$ with the color of the corresponding vertex of $\widehat{G}_{M,r}$.

Let $H_{M,r}$ be the underlying undirected graph of $D_{M,r}$ obtained by replacing each directed edge with the corresponding undirected edge. 
 We show that $\widehat{G}_{M,r}$ is isomorphic to the graph $\overline{H_{M,r}}$.
  Note first that there is a natural bijection \[\varphi:
 V(\widehat{G}_{M,r})\to V(\overline{H_{M,r}})\] 
 between their vertex sets:
 a vertex $v$ in $V(\widehat{G}_{M,r})$ is a column of $M$ whose entry $r$ equals $1$ and therefore corresponds to a unique vertex $\varphi(v)$ of $V(\overline{H_{M,r}}) = V(H_{M,r}) = V(D_{M,r})$, namely to its support set.
 By definition, two vertices of $\widehat{G}_{M,r}$ are adjacent if and only if the corresponding columns are in conflict, that is, if there exist three rows of the matrix such that when restricted to the two columns, their entries are $(1,1)$, $(1,0)$, and $(0,1)$, respectively.
 Since the existence of a row with entries $(1,1)$ is guaranteed by row $r$, two columns are in conflict if and only if their support sets are incomparable with respect to inclusion.
 Thus, two vertices $u$ and $v$ are adjacent in $\widehat{G}_{M,r}$ if and only if the corresponding vertices $\varphi(u)$ and $\varphi(v)$ of $H_{M,r}$ are not adjacent.
 This shows that $\widehat{G}_{M,r}$ is isomorphic to $\overline{H_{M,r}}$, as claimed.

 Since $\widehat{G}_{M,r}$ is a perfect graph (see~\cite{CompAndAlg}), its chromatic number equals the maximum size of a clique.
 As shown above, the graphs $\widehat{G}_{M,r}$ and $\overline{H_{M,r}}$ are isomorphic to each other, hence, the maximum sizes of their cliques coincide.
 Moreover, the maximum size of a clique in $\overline{H_{M,r}}$ equals the maximum size of an independent set in $H_{M,r}$, which equals the maximum size of an antichain in $D_{M,r}$, since an independent set in $H_{M,r}$ is an antichain in $D_{M,r}$ and vice versa.
 Hence $\chi(G_{M,r}) =  \chi(\widehat{G}_{M,r}) = \wdt(D_{M,r})$, as claimed.
 \end{proof}

For a binary matrix $M$, we denote $\WW(M) = \sum_{r\in R_M}\wdt(D_{M,r})$. 
\Cref{thm:HRLB} and \Cref{lemma:principalsubgr} imply the following.

\smallskip
\begin{corollary}\label{cor:LB}
	Every binary matrix $M$ satisfies $\beta(M) \ge \WW(M)$.
\end{corollary}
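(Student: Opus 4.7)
The plan is to obtain the bound by a direct chaining of the two previously established results. Specifically, \Cref{thm:HRLB} gives the inequality
\[
\beta(M) \;\ge\; \sum_{r \in R_M} \chi(G_{M,r}),
\]
and \Cref{lemma:principalsubgr} identifies each summand on the right-hand side with the width of the corresponding principal subgraph, namely $\chi(G_{M,r}) = \wdt(D_{M,r})$.

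Substituting the equality from \Cref{lemma:principalsubgr} into the lower bound of \Cref{thm:HRLB}, we get
\[
\beta(M) \;\ge\; \sum_{r \in R_M} \chi(G_{M,r}) \;=\; \sum_{r \in R_M} \wdt(D_{M,r}) \;=\; \WW(M),
\]
where the last equality is just the definition of $\WW(M)$. This is the claimed bound.

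There is no real obstacle here: the corollary is a purely formal consequence of the two ingredients cited above, and the proof is essentially a one-line substitution, so no additional argument or case analysis is needed.
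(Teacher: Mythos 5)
Your proof is correct and matches the paper exactly: the paper derives \Cref{cor:LB} as an immediate consequence of \Cref{thm:HRLB} and \Cref{lemma:principalsubgr}, precisely the substitution you perform. Nothing is missing.
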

\smallskip

We now introduce another polynomially computable lower bound on $\WW(M)$ (and consequently on $\beta(M)$). 
Let us define a vertex weight function $w\colon V(D_M)\to \mathbb{Z}_+$ by setting the weight of a vertex $v \in V(D_M)$ simply as the cardinality of the vertex, i.e., $w(v) = \vert v \vert $. 
Recall that we denote by $\alpha_w(M)$ the maximum weight, with respect to $w$, of an antichain in $D_M$.

\smallskip
\begin{proposition}\label{prop:newLB}
Every binary matrix $M$ satisfies  $\alpha_w(M)\le \WW(M)$.
\end{proposition}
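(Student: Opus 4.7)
The plan is to prove the inequality by a direct double-counting argument, exploiting the definition of the weight function $w(v) = |v|$ and the fact that the principal subgraphs $D_{M,r}$ are induced subgraphs of $D_M$.

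First, I would fix a maximum weight antichain $A$ in $D_M$, so that $w(A) = \alpha_w(M)$. Since each vertex $v \in V(D_M)$ is itself a subset of $R_M$ (namely, the support set of some column), I would rewrite $|v|$ as $\sum_{r \in R_M} \mathbbm{1}[r \in v]$ and swap the order of summation:
\[
\alpha_w(M) = \sum_{v \in A} |v| = \sum_{v \in A} \sum_{r \in R_M} \mathbbm{1}[r \in v] = \sum_{r \in R_M} |\{v \in A : r \in v\}|\,.
\]

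Next, for each fixed row $r \in R_M$, I would argue that $A_r := \{v \in A : r \in v\}$ is an antichain in $D_{M,r}$. Indeed, by definition $D_{M,r}$ is the subgraph of $D_M$ induced by those vertices containing $r$, so $A_r \subseteq V(D_{M,r})$; moreover, reachability in an induced subgraph is inherited from the ambient digraph, so the pairwise unreachability of vertices in $A$ is preserved in $D_{M,r}$. Consequently, $|A_r| \le \wdt(D_{M,r})$.

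Summing these bounds over all rows gives
\[
\alpha_w(M) = \sum_{r \in R_M} |A_r| \le \sum_{r \in R_M} \wdt(D_{M,r}) = \WW(M)\,,
\]
which is the desired inequality. There is no substantial obstacle here: the proof is essentially the observation that the weight of an antichain in $D_M$ double-counts the pairs $(r,v)$ with $r \in v$ and $v \in A$, combined with the elementary fact that the row-wise ``slices'' of such an antichain remain antichains in the principal subgraphs.
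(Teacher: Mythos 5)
Your proof is correct and follows essentially the same argument as the paper: both fix a maximum weight antichain, double-count the pairs $(r,v)$ with $r \in v$, and bound each row-slice $A_r$ by $\wdt(D_{M,r})$ after observing it remains an antichain in the principal subgraph. The only cosmetic difference is that the paper sums over the set of rows actually occurring in some vertex of the antichain rather than over all of $R_M$, which changes nothing since the omitted terms vanish.
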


\begin{proof}
Let $M$ be an arbitrary binary matrix and let $D_M$ be its containment digraph.
Fix an arbitrary antichain $X$ of maximum weight with respect to $w$ in $D_M$.
Let $R$ be the set of rows $r\in R_M$ such that $r\in v$ for some $v\in X$.
For a row $r\in R$, let $X_r$ be the set of vertices $v\in X$ containing $r$.
Note that $X_r$ is an antichain in $D_{M,r}$, the principal subgraph of $D_M$ corresponding to $r$, and hence $|X_r| \le \wdt(D_{M,r})$.
It follows that 
\begin{equation*}
\begin{split}
    \alpha_w(M) &= \sum_{v\in X} |v| = \sum_{r\in R}|X_r|\le \sum_{r\in R}\wdt(D_{M,r}) \le \sum_{r\in R_M}\wdt(D_{M,r}) = \WW(M)\,
    \end{split}
\end{equation*}
as claimed.
\end{proof}

Notice that the lower bound introduced above is indeed polynomial-time computable by \Cref{thm:max-weight-antichain}.

In summary, combining \Cref{prop:newLB,cor:LB,obs:MULB-MUB} yields the following chain of inequalities for any binary matrix $M$.
\smallskip

\begin{corollary}
	\label{cor:4ineq}
Every binary matrix $M$ satisfies
\begin{equation}\label{inequalities}
\alpha_w(M)\le \WW(M) \le \beta(M) \le \beta_{\ell}(M).    
\end{equation}
\end{corollary}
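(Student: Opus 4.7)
The statement to prove is a corollary that simply chains together three inequalities that have already been established in the excerpt. The plan is to invoke these prior results in sequence and observe that they compose into the desired chain.

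First, I would note that the leftmost inequality $\alpha_w(M) \le \WW(M)$ is exactly the content of \Cref{prop:newLB}, so no new argument is needed there. Next, the middle inequality $\WW(M) \le \beta(M)$ is \Cref{cor:LB}, which itself was derived from \Cref{thm:HRLB} (Hajirasouliha and Raphael) combined with the identity $\chi(G_{M,r}) = \wdt(D_{M,r})$ from \Cref{lemma:principalsubgr}. Finally, the rightmost inequality $\beta(M) \le \beta_\ell(M)$ is \Cref{obs:MULB-MUB}, which follows immediately from the fact that every linear branching is a branching, so the minimum over linear branchings is bounded below by the minimum over all branchings.

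Having stated these three sources, the proof reduces to writing the single display
\[
\alpha_w(M) \le \WW(M) \le \beta(M) \le \beta_\ell(M),
\]
with each inequality justified by the appropriate citation to the preceding results. There is no genuine obstacle here; the only thing to be careful about is making sure the direction of each inequality is correctly aligned (in particular that $\WW(M)$ is a lower bound for $\beta(M)$, not an upper bound, and that $\beta_\ell$ is an upper bound on $\beta$ because linear branchings are a restricted subclass of branchings whose minimum uncovered-pair count can only be larger). Since all three inequalities have already been proved individually, no additional lemma is required, and the corollary follows by transitivity of $\le$.
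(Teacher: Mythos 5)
Your proposal is correct and matches the paper exactly: the corollary is obtained by chaining \Cref{prop:newLB}, \Cref{cor:LB}, and \Cref{obs:MULB-MUB}, which is precisely how the paper derives it (``combining \Cref{prop:newLB,cor:LB,obs:MULB-MUB} yields the following chain of inequalities''). Nothing further is needed.
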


\section{Polynomially solvable cases}
\label{sec:polsolcases}

In this section, we identify new polynomially solvable cases of the MUB problem.
In \Cref{sec:WidthEqMaxEl}, we apply the results from \Cref{sec:bound} to show that the MUB problem is solvable in polynomial time on instances where the number of maximal elements of the corresponding poset $P_M$ equals the width of $M$.
Then, in \Cref{sec:boundedwidth}, we show that for any fixed integer $k\ge 1$, the MUB problem is solvable in polynomial time on instances of width $k$.

\subsection{Instances with many maximal elements}
\label{sec:WidthEqMaxEl}

Given a binary matrix $M$, we say that two vertices $v_1,v_2$ of the containment digraph $D_M$ are \emph{comparable} if either $v_1 \subset v_2$ or $v_2 \subset v_1$, otherwise we say that $v_1$ and $v_2$ are \emph{incomparable}.
Since no two sinks of $D_M$ are comparable, the set of maximal elements of the poset $P_M$ forms an antichain; in particular, the number of maximal elements of $P_M$ is at most the width of $M$.
Next, we show that in instances where this upper bound is achieved, all the four quantities appearing in the chain of inequalities~\eqref{inequalities} are the same.
In particular, this implies that for such instances the MUB problem can be solved in polynomial time.

Let $D_M$ be a containment digraph and let $B$ be a branching in $M$.
For every $v \in V(D_M)$, we denote by $\nu_B(v)$ the number of rows $r\in v$ that are uncovered in $v$ with respect to $B$.
Note that then $|U(B)| = \sum_{v \in V(D_M)} \nu_B(v)$. 
Let $H$ be a subgraph of $D_M$. 
A vertex $u\in V(H)$ is a \emph{maximal element} of $H$ if $d^+_H(u) = 0 $. 
Similarly, a vertex $u\in V(H)$ is a \emph{minimal element} of $H$ if $d^-_H(u) = 0$. 
We denote the set of maximal (respectively~minimal) elements by $\max \, H$ (respectively~$\min \, H$).
We extend the definition in the natural way also to sets of arcs of spanning subgraphs of $D_M$; in particular, to branchings.

\smallskip
\begin{lemma}\label{lem:linear-branchings}
Let $D_M$ be a containment digraph and let $B$ be a linear branching in $M$.
Then $|U(B)| = \sum_{v\in \max B} |v|$. 
\end{lemma}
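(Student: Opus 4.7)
The plan is to exploit the bijection between linear branchings and chain partitions (Observation 2.1) and then perform a telescoping sum, chain by chain.

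First I would recall that, by Observation 2.1, the linear branching $B$ corresponds to a unique chain partition $P = \{C_1, \ldots, C_p\}$ of $D_M$, where the vertices of $C_i$ are ordered as $v_1^i \subset v_2^i \subset \cdots \subset v_{k_i}^i$ and $B$ consists precisely of the arcs $(v_j^i, v_{j+1}^i)$. Under this correspondence, the maximal elements of $B$ (in the sense extended to arc sets, i.e., vertices with $d^+_B(u) = 0$) are exactly the top elements $v_{k_i}^i$ of each chain.

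Next I would analyze $\nu_B(v)$ for each vertex $v \in V(D_M)$. Since $B$ is linear, $|N^-_B(v)| \le 1$. If $v = v_1^i$ is the minimum of its chain, then $N^-_B(v) = \emptyset$ and every row of $v$ is uncovered, so $\nu_B(v) = |v|$. Otherwise $v = v_j^i$ for some $j \ge 2$ and $N^-_B(v) = \{v_{j-1}^i\}$ with $v_{j-1}^i \subset v_j^i$; the rows of $v$ that are covered are exactly those in $v_{j-1}^i$, so $\nu_B(v) = |v_j^i| - |v_{j-1}^i|$.

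Now I would collect contributions chain by chain. Using the identity $|U(B)| = \sum_{v \in V(D_M)} \nu_B(v)$ and splitting the sum over the chains of $P$, the contribution of chain $C_i$ becomes
\begin{equation*}
|v_1^i| + \sum_{j=2}^{k_i}\bigl(|v_j^i| - |v_{j-1}^i|\bigr) = |v_{k_i}^i|,
\end{equation*}
a standard telescoping cancellation. Summing over all chains yields $|U(B)| = \sum_{i=1}^{p} |v_{k_i}^i| = \sum_{v \in \max B} |v|$, as required.

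There is no substantive obstacle here: the only thing to be careful about is the base case of each chain (the minimal element has no predecessor under $B$, so all its rows count as uncovered), and the fact that on an arc $(u,v) \in B$ we have $u \subset v$, which guarantees that every row of $u$ is automatically a row of $v$ and thereby covers it. Once these two points are handled, the telescoping argument closes the proof.
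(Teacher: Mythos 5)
Your proof is correct and follows essentially the same route as the paper's: both invoke the bijection of Observation~\ref{obs:chain-partitions-and-linear-branchings} to pass to the chain partition, compute $\nu_B(v_j^i) = |v_j^i| - |v_{j-1}^i|$ along each chain, and telescope to obtain $\sum_i |v_{k_i}^i| = \sum_{v \in \max B} |v|$. The only cosmetic difference is that the paper folds your base case into the telescoping by setting $v_0^i = \emptyset$, whereas you treat the chain minimum separately; the arguments are otherwise identical.
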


\begin{proof}
By \Cref{obs:chain-partitions-and-linear-branchings}, there is a unique chain partition $P =\{C_1,\ldots, C_q\}$ of $D_M$ such that 
$B^P = B$.
For each $i\in \{1,\ldots, q\}$, let us denote by $k_i$ the cardinality of $C_i$.
Let us order the vertices of $C_i$ as $\{v_1^i,\ldots,v_{k_i}^i\}$ so that 
\hbox{$v^i_1\subset v^i_2 \subset \cdots \subset v_{k_i}^i$} (see \cite[Figure 21]{Baghirova} with $n=q$ in our case).
We count the number of uncovered pairs in each chain separately. 
For each $i\in\{1,\ldots,q\}$ and $j\in\{1,\ldots,k_i\}$, the number of uncovered pairs with second coordinate equal to $v^i_j$ is $\nu_B(v^i_j) = \vert v_j^i\setminus v_{j-1}^i \vert = |v^i_j| - |v^i_{j-1}|$ (where $v_0^i=\emptyset$).
Hence, the total number of uncovered pairs in $C_i$ equals \[\sum_{j=1}^{k_i}\nu_B(v^i_j) =\sum_{j=1}^{k_i} (|v^i_j| - |v^i_{j-1}|) = |v_{k_i}^i|\,.\]
The same argument holds for every chain in $P$, and thus the total number of uncovered pairs with respect to $B$ is
\[|U(B)| = \sum_{v\in V(D_M)}\nu_B(v)= \sum_{i = 1}^q |v_{k_i}^i| = \sum_{v\in \max B} |v|\,,\]
as claimed.
\end{proof}

Before we move on to the theorem, we state the results used for showing that the algorithm introduced in the proof of \Cref{thrm:wdtmaxel} runs in polynomial time. 
By Dilworth's theorem~\cite{dilworth}, the width of a DAG equals the minimum size of a chain partition in the digraph. 
Such a chain partition can be computed by reducing the problem to the maximum flow problem (see \cite{dilworthtomaxflow});\footnote{In fact, applying the approach of Fulkerson~\cite{Fulkerson} (see also Mäkinen et al.~\cite{makinen2015genome} and Ntafos and Hakimi~\cite{MR0545530}) reduces the problem on an $n$-vertex DAG to a maximum matching problem in a derived bipartite graph with $2n$ vertices.}
hence, the aforementioned  $\mathcal{O}(m^{1+o(1)})$ algorithm for the maximum flow problem due to van den Brand et al.~\cite{maxflowlinear} applies.
\smallskip

\begin{theorem}\label{thrm:wdtmaxel}
Let $M$ be a binary matrix with $m$ rows and $n$ columns such that the corresponding poset $P_M$ has exactly $q$ maximal elements, where $q =\wdt(M)$. 
 Then, $\alpha_w(M)= \WW(M)= \beta(M)=\beta_{\ell}(M)$, and an optimal branching of $D_M$ can be computed in time $\mathcal{O}(n^2(m+n^{o(1)}))$.
\end{theorem}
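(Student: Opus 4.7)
The plan is to close the chain of inequalities from \Cref{cor:4ineq} by establishing $\beta_{\ell}(M)\le \alpha_w(M)$ under the hypothesis that the number of sinks of $D_M$ equals $q=\wdt(M)$. Once that reverse inequality is in place, all four quantities in \eqref{inequalities} collapse to a single value, and the constructive part of the proof will exhibit a linear branching achieving this value.

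First I would pass to a minimum chain partition. By Dilworth's theorem, $D_M$ admits a chain partition $P=\{C_1,\ldots,C_q\}$ of size exactly $q=\wdt(M)$. Let $S$ denote the set of sinks of $D_M$, so $|S|=q$ by hypothesis; $S$ is an antichain because distinct sinks are incomparable. Each vertex of $S$ lies in some chain of $P$, and no two elements of $S$ can share a chain (they are incomparable), so by cardinality every chain of $P$ contains exactly one sink. Since a sink has no out-arc in $D_M$, this sink is necessarily the maximum element of its chain. Let $B=B^P$ be the linear branching associated with $P$ via \Cref{obs:chain-partitions-and-linear-branchings}; then $\max B = S$.

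Next I apply \Cref{lem:linear-branchings} to $B$, obtaining
\[
|U(B)| \;=\; \sum_{v\in \max B}|v| \;=\; \sum_{v\in S}|v| \;=\; w(S).
\]
Because $S$ is an antichain in $D_M$, we have $w(S)\le \alpha_w(M)$. Since $B$ is a linear branching, $\beta_{\ell}(M)\le |U(B)|\le \alpha_w(M)$. Combined with the inequalities of \Cref{cor:4ineq}, this forces
\[
\alpha_w(M)\;=\;\WW(M)\;=\;\beta(M)\;=\;\beta_{\ell}(M),
\]
and moreover shows that the linear branching $B$ above is already optimal for MUB, not just for MULB.

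It remains to describe the algorithm and its running time, which I expect to be routine rather than the main obstacle; the only real subtlety is showing that a minimum chain partition automatically ends each chain at a distinct sink, which is exactly what I established above. The steps are: (i) construct $D_M$ directly from the matrix in time $\mathcal{O}(n^2m)$; (ii) compute a minimum chain partition $P$ of $D_M$ via the Fulkerson reduction to maximum matching in a bipartite graph on $2n$ vertices with $\mathcal{O}(n^2)$ edges, which by the algorithm of van den Brand et al.~\cite{maxflowlinear} runs in time $\mathcal{O}(n^{2+o(1)})$; (iii) output the associated linear branching $B^P$ in time $\mathcal{O}(n)$. Summing these contributions yields the claimed bound $\mathcal{O}(n^2(m+n^{o(1)}))$, completing the proof.
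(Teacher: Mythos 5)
Your proposal is correct and follows essentially the same route as the paper's proof: a Dilworth chain partition into $q=\wdt(M)$ chains whose tops are exactly the $q$ sinks, the associated linear branching $B^P$ evaluated via \Cref{lem:linear-branchings}, and the collapse of the chain of inequalities in \Cref{cor:4ineq}, with the identical complexity accounting. The only (harmless) difference is that where the paper asserts the sinks form a maximum-weight antichain, you use only $w(S)\le \alpha_w(M)$, which indeed suffices and slightly streamlines the argument.
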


\begin{proof}
In order to show that $\alpha_w(M)= \WW(M)= \beta(M)=\beta_{\ell}(M)$, by \Cref{cor:4ineq}, it suffices to show that $\beta_{\ell}(M)\le \alpha_w(M)$.
To this end, we introduce a branching of the containment digraph $D_M$ corresponding to the chain partition of $D_M$ into $q$ chains and using such branching we show that $\beta_{\ell}(M)\le \alpha_w(M)$.
Furthermore, we show how to obtain such a branching, which will be optimal, in time $\mathcal{O}(n^2(m+n^{o(1)}))$.

Let $D_M$ be the containment digraph corresponding to $M$. Since $\wdt(M)=q$, by Dilworth's theorem, there exists a chain partition of $D_M$ into $q$ chains and, as mentioned earlier, such a chain partition can be found by reducing the problem to the maximum flow problem and solved in time $\mathcal{O}(|A(D_M)|^{1+o(1)})$, which is in $\mathcal{O}(n^{2+o(1)})$.  
Let $P=\{C_1,\ldots, C_q\}$ be a chain partition of $D_M$ into $q$ chains. 
For each $i\in \{1,\ldots, q\}$, the vertices of $C_i$ can be ordered as $\{v_1^i,\ldots,v_{k_i}^i\}$ where $k_i$ is the number of vertices in $C_i$ so that \hbox{$v^i_1\subset v^i_2 \subset \cdots \subset v_{k_i}^i$}.
Since, by assumption, there are $q$ maximal elements in $P_M$, and no two maximal elements can belong to the same chain, we conclude that the maximal elements of $P_M$ are the vertices $v^1_{k_1},\ldots, v^q_{k_q}$. It is easy to see that they form an antichain of maximal weight. Denote the weights of vertices $v_1^i,\ldots,v_{k_i}^i$ in $C_i$ by $w_1^i,\ldots,w_{k_i}^i$, respectively. Then, $\alpha_w(M) = w_{k_1}^1+\ldots+ w_{k_q}^q$.

By \Cref{lem:linear-branchings}, the number of uncovered pairs in the linear branching $B^P$ of $D_M$ is equal to $w_{k_1}^1 + \ldots +  w_{k_q}^q$.
Therefore, 
\[\beta_{\ell}(M)\le \vert U(B^P) \vert = w_{k_1}^1 + \ldots +  w_{k_q}^q =  \alpha_w(M)\,,\] 
where the inequality holds by the definition of linear uncovering number.
Recall that the containment digraph of a binary matrix $M\in \{0,1\}^{m \times n}$ can be computed in time $\mathcal{O}(n^2m)$. 
The linear branching corresponding to the chain partition of $D_M$ can be computed in time $\mathcal{O}(n^{2+o(1)})$.
Hence, the total time complexity is of the order $\mathcal{O}(n^2(m+n^{o(1)}))$.
\end{proof}

Notice that the complexity of recognizing whether $M$ is a binary matrix such that the corresponding poset $P_M$ has exactly $\wdt(M)$ maximal elements, is $\mathcal{O}(n^2m)$. 
First, as discussed above, the containment digraph of a binary matrix $M\in \{0,1\}^{m \times n}$ can be computed in time $\mathcal{O}(n^2m)$.
Computing its width can be done in time $\mathcal{O}(n^{2+o(1)})$.
Finally, the complexity of finding the maximal elements of $D_M$ is $\mathcal{O}(n)$, by considering each vertex and checking whether it has an outgoing arc. 
Hence the total time complexity is of the order $\mathcal{O}(n^2(m+n^{o(1)}))$.

\subsection{Instances of bounded width}
\label{sec:boundedwidth}
In this subsection, we show our main result: for any integer $k\ge 1$, there exists a polynomial-time algorithm for computing $\beta(M)$ for binary matrices $M$ of width at most $k$ (\Cref{thm:boundedwidth}). 
The exponent of the polynomial depends on~$k$.
Let us remark that recognizing if a given binary matrix $M$ has width at most $k$ can be done in \textsf{FPT} time by computing the containment digraph and applying to it the algorithm by C{\'a}ceres, Cairo, Mumey, Rizzi, and Tomescu~\cite{dilworthtomaxflow}.

Before we move on to the technical results of the section, let us first give an informal description of the algorithm. 
Given a binary matrix $M$ of width at most $k$, the algorithm seeks to construct an optimal branching of the corresponding digraph $D_M$ with at most $k$ leaves (that is, vertices of in-degree~0), whose existence follows from \Cref{lem:crossingarcs,lem:inter}. 
Such a branching necessarily contains at most $k$ vertices of in-degree at least two and any such vertex in fact has in-degree at most $k$. 
Based on this observation, the algorithm first tries to guess which vertices will have in-degree at least two in the sought branching, as well as their in-neighborhoods (such a vertex together with its in-neighborhood is referred to as an \emph{in-star}, see p.~\pageref{page-in-star}). 
For each guess, it then tries to complete the corresponding branching into a maximal branching with at most $k$ leaves and where all vertices of in-degree at least two are precisely those that have been guessed (this is referred to as a \emph{linear completion}, see p.~\pageref{page-linear-completion}). 
As we show in \Cref{lem:polylincomp}, deciding whether a guess may be completed into such a branching amounts to finding a perfect matching in an auxiliary bipartite graph and can be done in time polynomial in $|V(D_M)|$. Thus, the computationally expensive step is the first guessing phase which simply considers all possibilities, hence the $|V(D_M)|^{{\mathcal O}(k^2)}$ running time. 

Let us now define some relevant notions. 
Let $M$ be a binary matrix and $D_M$ the corresponding digraph. 
Then, for any maximal antichain $N$, we denote by $V^-_N \subseteq V(D_M)$ the set of vertices outside $N$ that can reach a vertex in $N$ and by $V_N^+\subseteq V(D_M)$ the set of vertices outside $N$ that are reachable from $N$. 
Note that by the maximality of $N$, the sets $N$, $V^-_N$, and $V_N^+$ form a partition of the vertex set of $D_M$.
 

Let $B$ be a branching of $D_M$. 
For any set $X\subseteq V(D_M)$, the \emph{restriction of $B$ to $X$} is the set $\{(u,v)\in B \colon u,v\in X\}$.
We say that $B$ is a \emph{maximal} branching if every vertex in $V(D_M) \setminus (\max \, D_M)$ has out-degree one in $B$. 
Given a maximal antichain $N$ of $D_M$, an arc $(u,v)\in B$ is a \emph{crossing arc with respect to $N$} if $u\in V^-_N$ and $v \in V_N^+$, and a \emph{non-crossing arc with respect to $N$} otherwise. 
If the antichain $N$ is clear from the context, we may simply say that $(u,v)$ is a \emph{crossing arc} and a \emph{non-crossing arc}, respectively.


The aim of the following three lemmas is to show that given a binary matrix of width $k$, there exists an optimal branching of the corresponding digraph $D_M$ with at most $k$ leaves. The first step consists in showing that $D_M$ has an optimal branching $B$ such that its restriction to $V_N^- \cup N$, for a given maximum antichain $N$ of $D_M$, is a linear branching.

\smallskip
\begin{lemma}
\label{lem:crossingarcs}
For any containment digraph $D_M$ and any maximum antichain $N$ of $D_M$, there exists an optimal branching $B$ of $D_M$ such that $B$ contains no crossing arc with respect to~$N$ and the restriction of $B$ to $V^-_N \cup N$ is a linear branching.
\end{lemma}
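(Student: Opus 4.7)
My plan is to take an arbitrary optimal branching $B^*$ of $D_M$ and transform it via a sequence of local arc-exchange operations into a branching $B$ with the two required properties, maintaining throughout the invariant that the total number of uncovered pairs never increases. The transformation proceeds in two phases.

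Phase 1 eliminates crossing arcs. Among all optimal branchings, I consider one, $B^{(1)}$, that minimizes the number of crossing arcs with respect to $N$; the claim is that this minimum equals zero. Suppose, for a contradiction, $(u,v) \in B^{(1)}$ is a crossing arc with $u \in V_N^-$ and $v \in V_N^+$. Since $u \in V_N^-$, the set $\mathcal{Z} = \{z \in V_N^- \cup N : u \subsetneq z\}$ is non-empty (any $n \in N$ with $u \subset n$ belongs to it). For a suitably chosen $z \in \mathcal{Z}$, I would replace $(u,v)$ by $(u,z)$ to obtain $B'$, a valid branching with strictly fewer crossing arcs. Only the uncovered counts at $v$ and $z$ change; a direct calculation shows that the net change equals $|u \cap Z_z| - |u \cap Z_v|$, where $Z_x$ denotes the union of the in-neighbors of $x$ in $B^{(1)} \setminus \{(u,v)\}$. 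The core of the argument is to show that $z$ can be chosen so that this quantity is non-positive, contradicting the minimality of $B^{(1)}$.

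Phase 2 linearizes the restriction to $V_N^- \cup N$. Starting from a branching without crossing arcs, observe that in-arcs to any vertex $v \in V_N^- \cup N$ necessarily come from $V_N^-$ (vertices in $V_N^+$ never lie below vertices of $V_N^- \cup N$). Among all optimal branchings without crossing arcs, pick one, $B^{(2)}$, minimizing the total excess in-degree $\sum_{v \in V_N^- \cup N} \max\{0, d^-_{B^{(2)}}(v) - 1\}$. If this quantity is positive, some $v$ has at least two in-neighbors from $V_N^-$; an exchange argument analogous to that of Phase 1 redirects one of them, say $u_i$, to a target in $V_N^- \cup N \setminus \{v\}$ above $u_i$ (which exists because $V_N^- \cup N$ contains sufficiently many upper bounds of $u_i$), without increasing $|U|$. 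This contradicts the minimality of $B^{(2)}$ and yields the desired $B$.

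The main obstacle shared by both phases is the exchange argument: demonstrating that a redirection target $z$ can always be chosen so that the net change in uncovered pairs is non-positive. Small examples show that a naive greedy choice — such as taking any fixed $n \in N$ above $u$ — can strictly increase $|U|$, so a finer analysis is required. The correct choice of $z$ likely depends on the current in-neighborhood structure of $B^{(1)}$ (for example, minimizing $|u \cap Z_z|$ over all admissible $z$), and the proof leverages the fact that $N$ is a \emph{maximum} antichain of $D_M$, which ensures that the candidate set $\mathcal{Z}$ is rich enough to always contain a suitable target.
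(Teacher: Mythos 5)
Your proposal is a plan rather than a proof: both phases reduce to an ``exchange lemma'' asserting that a redirection target $z$ can be chosen with net change $|u \setminus Z_v| - |u \setminus Z_z| \le 0$, and you explicitly leave this step unresolved (``the correct choice of $z$ likely depends on\dots''). That step is the entire mathematical content of the lemma, and there is concrete reason to doubt it in the single-arc form you use. The gain at a candidate $z$ is $|u \setminus Z_z|$, and this can vanish simultaneously for \emph{every} admissible $z$: the in-neighbors of $z$ in the current branching may cover $u$ by a union of sets each incomparable to $u$ (e.g.\ $u = \{1,2\}$ covered at $z$ by in-neighbors $\{1,3\}$ and $\{2,4\}$), so one cannot even recurse to a smaller candidate. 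Meanwhile the loss $|u \setminus Z_v|$ at the head of the crossing arc can be positive. Your extremal choice (optimal branchings minimizing the number of crossing arcs, resp.\ total excess in-degree) gives no control whatsoever over the in-neighborhood structure at the candidate targets, so no contradiction is forthcoming; escaping such a configuration appears to require rerouting several arcs in a coordinated way, which is exactly what a one-arc-at-a-time induction cannot do. (Your Phase~2 observation that maximality of $N$ guarantees a nonempty candidate set is fine as far as it goes, but it inherits the same unproven cost control.)

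The paper sidesteps per-arc accounting entirely with one global surgery, which is where the two approaches genuinely diverge. Starting from an optimal \emph{maximal} branching $B$, it keeps only $B^\top$, the restriction of $B$ to $N \cup V_N^+$, discards everything below $N$ \emph{including all crossing arcs}, and installs $B^P$, the linear branching arising (via \Cref{obs:chain-partitions-and-linear-branchings}) from a Dilworth chain partition of $D_M[V_N^- \cup N]$ into $|N|$ chains, which exists because $N$ is a maximum antichain. By \Cref{lem:linear-branchings} the new cost on $V_N^- \cup N$ is exactly $\sum_{v \in N} |v|$; on the old side, union bounds of the form $\lvert \bigcup_{(v,u)\in B} v \rvert \le \sum_{(v,u)\in B} |v|$ together with maximality of $B$ give $\nu_B^- \ge \sum_{u \in N \cup V_C^-} |u|$, where $V_C^-$ is the set of tails of crossing arcs, while deleting the crossing arcs increases the cost above $N$ by at most $\sum_{v \in V_C^-} |v|$. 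The term $\sum_{v \in V_C^-} |v|$ cancels in the aggregate and $|U(B')| \le |U(B)|$ follows, with no choice of individual targets and no case analysis; linearity below $N$ comes for free from the same replacement, so no separate second phase is needed. If you want to salvage your route, you would have to either prove the exchange lemma with a multi-arc (alternating-sequence style) rearrangement, or adopt this wholesale chain-partition replacement, whose summed inequalities are precisely what make the bookkeeping close.
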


\begin{proof}
Let $D_M$ be a containment digraph and let $N$ be a maximum antichain of $D_M$. Consider an optimal maximal branching $B$ of $D_M$ and denote by $B_C = \{(u,v) \in B\colon u \in V_N^- \text{ and }v \in V_N^+\}$ the set of crossing arcs with respect to~$N$ (note that $B_C$ may be empty). 
Let $P$ be a chain partition of the subgraph of $D_M$ induced by $V_N^- \cup N$ into $|N|$ chains and let $B^P$ be the corresponding linear branching of $V_N^- \cup N$. 
Denoting by $B^\top$ the restriction of $B$ to $N \cup V_N^+$, let us show that the branching $B' = B^\top \cup B^P$ is optimal. Since $B'$ contains no crossing arc with respect to~$N$ and the restriction of $B'$ to $V^-_N \cup N$ is a linear branching, this will conclude the proof.

Denote by $\nu_B^- = \sum_{u \in V_N^- \cup N} \nu_B(u)$ and by $\nu_B^+ = \sum_{u \in V_N^+} \nu_B(u)$.
Note that $|U(B)| = \nu_B^- + \nu_B^+$. 
Let us show that $|U(B')| \leq \nu_B^- + \nu_B^+$. 
Since $\lvert  \bigcup_{(v,u) \in B} v\rvert  \leq \sum_{(v,u) \in B} \lvert v\rvert $ for every $u \in V_N^- \cup N$, we have that 
\begin{equation*}
\begin{split}
\nu_B^- &= \sum_{u \in V_N^- \cup N} \Bigg(  \lvert u \rvert - \Bigg\lvert \bigcup_{(v,u) \in B} v \Bigg\rvert \Bigg) \geq \sum_{u \in V_N^- \cup N} \Bigg(\lvert u\rvert  - \sum_{(v,u) \in B} \lvert v\rvert \Bigg) \\
&= \sum_{u \in V_N^- \cup N} \lvert u\rvert  - \sum_{u \in V_N^- \cup N}\sum_{(v,u) \in B} \lvert v\rvert\,.
\end{split}
\end{equation*}

Then denoting by $V_C^- = \{u \in V_N^-\colon  (u,v) \in B_C \text{ for some } v \in N^+_B(u)\}$ the set of tails of arcs in $B_C$,
\[
\sum_{u \in V_N^- \cup N}\sum_{(v,u) \in B} \lvert v \rvert = \sum_{v \in V_N^- \setminus V_C^-} \lvert v \rvert\,,
\]
since $V_N^- \setminus V_C^- = \{v\in V_N^- \colon (v,u) \in B \text{ for some } u \in V_N^- \cup N\}$ (note indeed that by maximality of $B$, any vertex in $V_N^-$ is either the tail of a crossing arc or the tail of a non-crossing arc). 
It follows that
\begin{equation*}
\begin{split}
 \nu^-_B &\geq \sum_{u \in V_N^- \cup N} \lvert u\rvert  - \sum_{u \in V_N^- \setminus V_C^-} \lvert u\rvert  = \sum_{u \in N \cup V_C^-} \lvert u\rvert \,.
\end{split}
\end{equation*}
Next let $D_C^+ = \{v \in V_N^+\colon \exists u \in N^-_B(v)$ such that $(u,v) \in B_C\}$ be the set of heads of arcs in $B_C$. 
Then
\begin{equation*}
	\begin{split}
		\nu_B^+ &= \sum_{u \in V_N^+ \setminus D_C^+} \nu_B(u) + \sum_{u \in D_C^+} \nu_B(u) = \sum_{u \in V_N^+ \setminus D_C^+} \nu_B(u) + \sum_{u \in D_C^+} \Bigg(|u| - \Bigg\lvert\bigcup_{(v,u) \in B} v \Bigg\rvert\Bigg)\,,\\
	\end{split}
\end{equation*}
where for every $u \in D_C^+$,
\begin{equation*}
|u| - \Bigg\lvert\bigcup_{(v,u) \in B} v \Bigg\rvert
		\geq |u| - \Bigg(\Bigg\lvert\bigcup_{(v,u) \in B \setminus B_C} v\Bigg\rvert + \Bigg\lvert\bigcup_{(v,u) \in B_C} v \Bigg\rvert \Bigg) \,,
\end{equation*}
since $|\bigcup_{(v,u) \in B} v| \leq |\bigcup_{(v,u) \in B \setminus B_C} v| + |\bigcup_{(v,u) \in B_C} v|$. 
Since $\nu_B(u) = \nu_{B^\top}(u)$ for every $u \in V_N^+ \setminus D_C^+$, and for every $u \in D_C^+$, it holds that
\[
\lvert u\rvert -  \Bigg\lvert \bigcup_{(v,u) \in B \setminus B_C} v \Bigg\rvert = \lvert u \rvert -  \Bigg\lvert\bigcup_{(v,u) \in B^\top} v \Bigg\rvert = \nu_{B^\top}(u),
\]
it follows that
\begin{equation*}
\begin{split}
\nu_B^+ &\geq \sum_{u \in V_N^+ \setminus D_C^+} \nu_{B^\top}(u) + \sum_{u \in D_C^+} \nu_{B^\top}(u) - \sum_{u \in D_C^+} \Bigg\lvert \bigcup_{(v,u) \in B_C} v \Bigg\rvert\\
&\geq \sum_{u \in V_N^+} \nu_{B^\top}(u) - \sum_{u \in D_C^+}\sum_{(v,u) \in B_C} \lvert v \rvert = \sum_{u \in V_N^+} \nu_{B^\top}(u)  - \sum_{v \in V_C^-} \lvert v \rvert\,,
\end{split}
\end{equation*}
since $|\bigcup_{(v,u) \in B_C} v| \leq \sum_{(v,u) \in B_C} |v|$ for every $u \in D_C^+$. Now observe that, by \Cref{lem:linear-branchings},
\[
\sum_{v \in V_N^- \cup N} \nu_{B'}(v) = \sum_{v \in V_N^- \cup N} \nu_{B^P}(v) = \sum_{v \in N} \lvert v \rvert \,,
\]
since $B^P$ is a linear branching of the subgraph induced by $V_N^- \cup N$, and so
\begin{equation*}
\begin{split}
|U(B')| &= \sum_{v \in V_N^- \cup N} \nu_{B'}(v) + \sum_{v \in V_N^+} \nu_{B'}(v)= \sum_{v \in N} \lvert v \rvert + \sum_{v \in V_N^+} \nu_{B^\top}(v) \\
&= \sum_{v \in N} \lvert v\rvert + \sum_{v \in V_C^-} \lvert v \rvert + \sum_{v \in V_N^+} \nu_{B^\top}(v) - \sum_{v \in V_C^-} \lvert v \rvert\leq \nu_B^- + \nu_B^+ = |U(B)|.
\end{split}
\end{equation*}
Thus, $B'$ is an optimal branching of $D_M$, which concludes the proof.
\end{proof}

The following lemma shows that, given a maximum antichain $N$ of $D_M$, the set $V_N^-$ is irrelevant for the computation of $\beta(D_M)$.

\smallskip
\begin{lemma}
\label{lem:reduce}
Let $D_M$ be a containment digraph and let $N$ be a maximum antichain of $D_M$. Then $$\beta(D_M) = \beta(D_M[N \cup V_N^+]).$$
\end{lemma}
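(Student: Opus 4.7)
The plan is to prove both inequalities $\beta(D_M) \ge \beta(D_M[N\cup V_N^+])$ and $\beta(D_M) \le \beta(D_M[N\cup V_N^+])$, using Lemma~\ref{lem:crossingarcs} to decouple the uncovering counts coming from $V_N^-\cup N$ and from $N\cup V_N^+$. Throughout, I rely on two easy structural facts about the partition $(V_N^-, N, V_N^+)$ of $V(D_M)$: since $D_M$ is a DAG and $N$ is an antichain, no arc of $D_M$ ends at a vertex of $V_N^-\cup N$ from within $V_N^+$, and no arc of $D_M$ ends at a vertex of $N$ from within $N\cup V_N^+$. (Both are quick consequences of the fact that such an arc would force two elements of $N$ to become comparable.)

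For the inequality $\beta(D_M)\ge \beta(D_M[N\cup V_N^+])$, I would pick, via Lemma~\ref{lem:crossingarcs}, an optimal (and without loss of generality maximal) branching $B$ of $D_M$ containing no crossing arcs and whose restriction $B_1$ to $V_N^-\cup N$ is a linear branching; let $B_2$ denote the restriction of $B$ to $N\cup V_N^+$, so that $B = B_1 \sqcup B_2$. By the structural facts above, $\nu_B(v) = \nu_{B_1}(v)$ for every $v \in V_N^-\cup N$ and $\nu_B(v) = \nu_{B_2}(v)$ for every $v \in V_N^+$. The key step is to argue that $\max B_1 = N$: by Observation~\ref{obs:chain-partitions-and-linear-branchings}, $B_1$ arises from a chain partition of $D_M[V_N^-\cup N]$, and since $N$ is a maximum antichain of $D_M$, Dilworth's theorem forces this partition to have exactly $|N|$ chains, each containing exactly one element of $N$; moreover, that element must sit at the top of its chain, because any $v\in V_N^-$ placed above its chain's $N$-vertex would itself be strictly below some $w\in N$, making two elements of $N$ comparable. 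Applying Lemma~\ref{lem:linear-branchings} then gives $|U(B_1)|=\sum_{v\in N}|v|$, while $\nu_{B_2}(v)=|v|$ for every $v\in N$ (no incoming arcs in $B_2$), and therefore
\[
|U(B)| \;=\; |U(B_1)| + \sum_{v\in V_N^+}\nu_{B_2}(v) \;=\; \sum_{v\in N}|v| + \sum_{v\in V_N^+}\nu_{B_2}(v) \;=\; |U(B_2)|.
\]
Since $B_2$ is a branching of $D_M[N\cup V_N^+]$, this yields $\beta(D_M) = |U(B)| = |U(B_2)| \ge \beta(D_M[N\cup V_N^+])$.

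For the reverse inequality, I would take an optimal branching $B_2^*$ of $D_M[N\cup V_N^+]$, fix any chain partition of $D_M[V_N^-\cup N]$ into $|N|$ chains (again furnished by Dilworth's theorem), and let $B_1^*$ be the associated linear branching from Observation~\ref{obs:chain-partitions-and-linear-branchings}. Then $B^* := B_1^* \cup B_2^*$ is a branching of $D_M$, and the same bookkeeping as above yields $|U(B^*)| = |U(B_2^*)|$, hence $\beta(D_M) \le |U(B^*)| = \beta(D_M[N\cup V_N^+])$. The one nontrivial point to justify carefully when writing up the full proof is the identification $\max B_1 = N$, since the cancellation between the $\sum_{v\in N}|v|$ contribution from the linear part and the ``vertices of $N$ have no in-neighbors in $B_2$'' contribution from the upper part depends on it; every other step is straightforward accounting.
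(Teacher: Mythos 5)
Your overall route coincides with the paper's: invoke \Cref{lem:crossingarcs} to get an optimal branching with no crossing arcs and linear lower part, split the uncovered-pair count across the partition $(V_N^-, N, V_N^+)$ using your two structural facts (both correct), evaluate the lower part via \Cref{lem:linear-branchings}, and for the reverse inequality glue an optimal branching of $D_M[N \cup V_N^+]$ onto a linear branching of $D_M[V_N^- \cup N]$ coming from a minimum chain partition. The paper packages the two inequalities as a single exchange argument (``$B^\top$ is optimal for $D^\top$, else replacing it by a better branching contradicts optimality of $B$''), but the content is the same.

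There is, however, one step that fails as written, and it is precisely the one you flagged: your justification of $\max B_1 = N$. Dilworth's theorem does \emph{not} force the chain partition associated with $B_1$ to have exactly $|N|$ chains; it only gives the lower bound $|N|$. The partition corresponding to an arbitrary linear branching has one chain per element of $\max B_1$, and nothing in the statement of \Cref{lem:crossingarcs} prevents some vertex of $V_N^-$ from having no outgoing arc in $B$, which would create an extra chain topped inside $V_N^-$. What would force exactly $|N|$ chains is maximality of $B$ combined with the absence of crossing arcs (then every $v \in V_N^-$ has its unique out-arc inside $V_N^- \cup N$, so every chain top lies in $N$); but your parenthetical ``without loss of generality maximal'' is itself not available from the statement of \Cref{lem:crossingarcs} alone, since greedily adding out-arcs to orphaned vertices of $V_N^-$ can give some vertex a second in-arc and destroy linearity of the restriction. (The paper's own proof silently inherits $\max B^L = N$ from the construction inside the proof of \Cref{lem:crossingarcs}, which you could not see.)

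The damage is local and repairable with material already in your write-up. For the forward inequality you never need equality: your chain-top observation shows that every chain containing an element of $N$ is topped by that element (else some $v \in V_N^-$ would lie above one element of $N$ and below another), so $\sum_{v \in \max B_1} |v| \geq \sum_{v \in N} |v|$, whence $|U(B)| \geq \sum_{v \in N} |v| + \sum_{v \in V_N^+} \nu_{B_2}(v) = |U(B_2)| \geq \beta(D_M[N \cup V_N^+])$. Alternatively, replace $B_1$ wholesale by the linear branching of a minimum chain partition, exactly as you do in the reverse direction (where Dilworth is invoked correctly); by the same chain-top observation this does not increase $|U|$, and afterwards $\max B_1 = N$ genuinely holds. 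With either patch the proof is complete.
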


\begin{proof}
By \Cref{lem:crossingarcs}, there exists an optimal branching $B$ of $D_M$ such that $B$ contains no crossing arcs with respect to~$N$ and the restriction of $B$ to $V^-_N \cup N$ is a linear branching. 
Let us denote by $B^L$ and $B^\top$ the restrictions of $B$ to $V^-_N \cup N$ and $N \cup V_N^+$, respectively. 
Since $B^L$ is a linear branching with $\max B^L = N$, \Cref{lem:linear-branchings} implies that \[\sum_{v\in V_N^-\cup N}\nu_{B^L}(v) = |U(B^L)| = \sum_{v\in N}|v|\] and consequently $\sum_{v\in V_N^-\cup N}\nu_B(v) = \sum_{v\in N}|v|$.
It follows that
\begin{equation*}
\begin{split}
|U(B)| &= \sum_{v \in V^-_N \cup N} \nu_B(v) + \sum_{v \in V_N^+} \nu_B(v)= \sum_{v \in N} \lvert v \rvert + \sum_{v \in V_N^+} \nu_B(v)= \sum_{v \in N} \nu_{B^\top}(v) + \sum_{v \in V_N^+} \nu_{B^\top}(v)\,.
\end{split}
\end{equation*}
We claim that $B^\top$ is an optimal branching of $D^\top = D_M[N \cup V_N^+]$. Indeed, suppose to the contrary that $B^\top$ is not an optimal branching of $D^\top$ and let $B^*$ be an optimal branching of $D^\top$. 
Then
\[
\sum_{v \in N \cup V_N^+} \nu_{B^\top}(v) > \sum_{v \in N \cup V_N^+} \nu_{B^*}(v)\,.
\]
Now $B' = B^L \cup B^*$ is a branching of $D_M$ and thus, similarly to the above,
\begin{equation*}
\begin{split}
\lvert U(B') \rvert &= \sum_{v \in V^-_N \cup N} \nu_{B'}(v) + \sum_{v \in V_N^+} \nu_{B'}(v)= \sum_{v \in N} \lvert v \rvert + \sum_{v \in V_N^+} \nu_{B'}(v)\\
&= \sum_{v \in N} \nu_{B^*}(v) + \sum_{v \in V_N^+} \nu_{B^*}(v)< \lvert U(B) \rvert\,,
\end{split}
\end{equation*}
a contradiction to the optimality of $B$. Therefore, $B^\top$ is an optimal branching of $D^\top$ and thus,
\[\beta(D^\top) = \sum_{v \in N \cup V_N^+} \nu_B(v) = \lvert U(B) \rvert = \beta(D_M)\] as claimed.
\end{proof}

The following lemma is the last and key step in showing that $D_M$ has a branching with $\wdt(D_M)$ leaves.

\smallskip
\begin{lemma}
\label{lem:inter}
Let $D_M$ be a containment digraph such that $\min D_M$ is the only maximum antichain of $D_M$. 
Then there exists a maximal optimal branching $B$ whose leaves are exactly the minimal elements of $D_M$.
\end{lemma}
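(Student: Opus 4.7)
The plan is an exchange argument guided by a Hall-type combinatorial bound derived from the uniqueness of $\min D_M$ as a maximum antichain. First I would establish the following fact: for every non-empty $S \subseteq V(D_M) \setminus \min D_M$, one has $|\mathord{\downarrow} S| \geq |S|+1$, where $\mathord{\downarrow} S := \{u \in V(D_M) : u \subsetneq v \text{ for some } v \in S\}$. Indeed, $\max S$ is a non-empty antichain of non-minimal vertices, and every element of $\min D_M \setminus \mathord{\downarrow} S$ is incomparable with every element of $\max S$; hence $A := \max S \cup (\min D_M \setminus \mathord{\downarrow} S)$ is an antichain in $D_M$, distinct from $\min D_M$ (since $\max S \neq \emptyset$ and $\max S \cap \min D_M = \emptyset$). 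By uniqueness $|A| < |\min D_M|$, whence $|\max S| < |\min D_M \cap \mathord{\downarrow} S|$; combined with the easy inclusion $S \setminus \max S \subseteq \mathord{\downarrow} S \setminus \min D_M$, this yields the claim.

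Next I would take an optimal maximal branching $B$ minimizing $|L(B)|$ and aim to show $L(B) = \min D_M$. Assume for contradiction that $L(B) \supsetneq \min D_M$, pick a bad leaf $v \in L_b := L(B) \setminus \min D_M$, some $u \subsetneq v$, and let $w$ be the unique out-neighbor of $u$ in $B$ (which exists since $u$ is not a sink and $B$ is maximal). For the swap $B' := (B \setminus \{(u,w)\}) \cup \{(u,v)\}$, a direct computation using $u \subseteq v$ (since $v$ is a leaf of $B$) and $u \subseteq w$ gives $\nu_{B'}(v) - \nu_B(v) = -|u|$ and $\nu_{B'}(w) - \nu_B(w) \leq |u|$, so $|U(B')| \leq |U(B)|$ and $B'$ remains optimal. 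If $w$ had in-degree at least $2$ in $B$, then $|L(B')| = |L(B)| - 1$, contradicting minimality. Hence for every such pair $(v,u)$, the target $w$ has in-degree exactly one in $B$; this forces the map sending each $u \in \mathord{\downarrow} L_b$ to its unique out-neighbor in $B$ to be injective, with image $\mathcal{W} \subseteq V(D_M) \setminus (\min D_M \cup L_b)$ satisfying $|\mathcal{W}| = |\mathord{\downarrow} L_b| \geq |L_b|+1$ by the bound above.

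Finally I would derive a contradiction from this structural rigidity. Modeled on alternating-path augmentation in bipartite matching, the idea is to iterate Case-$2$ swaps along a walk: starting from $v_0 \in L_b$, choose $u_1 \subsetneq v_0$ and swap to produce a new bad leaf $w_1$; then choose $u_2 \subsetneq w_1$ with $u_2 \neq u_1$ (possible since $|\mathord{\downarrow}\{w_1\}| \geq 2$), and so on. Each such swap preserves optimality by the calculation above, and the strict slack $|\mathord{\downarrow} S| \geq |S|+1$ — the analog of the unsaturated endpoint in a bipartite augmenting path — is designed to guarantee that the walk cannot remain in Case $2$ indefinitely: at some step the candidate target must have in-degree at least $2$ in the currently modified branching, triggering a Case-$1$ swap that strictly decreases $|L|$ and contradicts the minimality of $|L(B)|$. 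The main obstacle will be to formalize this augmenting walk rigorously — showing that it is well-defined at every step, that a Case-$1$ swap is eventually triggered, and that no step undoes prior progress — in order to conclude that $L(B) = \min D_M$ at the optimum.
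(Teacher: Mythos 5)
Your first two paragraphs are correct, and your Hall-type bound is a genuinely nice packaging of the uniqueness hypothesis: the paper's proof only ever uses its $S=\{v\}$ instance (its Claim that every non-minimal vertex has at least two in-neighbors in $D_M$), whereas you prove the surplus $|{\downarrow}S|\ge |S|+1$ for all non-empty $S\subseteq V(D_M)\setminus \min D_M$. The single-swap computation ($\nu_{B'}(v)-\nu_B(v)=-|u|$, $\nu_{B'}(w)-\nu_B(w)\le |u|$), the Case-1 contradiction when $d^-_B(w)\ge 2$, and the resulting injective map from ${\downarrow}L_b$ into non-leaves are all sound. But note that the surplus $|\mathcal{W}|\ge |L_b|+1$ by itself contradicts nothing, and you say so yourself: everything hinges on your third paragraph, which is a plan rather than a proof. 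That is a genuine gap, and it sits exactly where the difficulty of the lemma lives. Concretely, the iteration is not obviously well-founded: at step $j$ you redirect $u_j$ from $w_j$ to the current bad leaf, which makes $w_j$ a new leaf with sole in-neighbor... no --- it makes $w_j$ a leaf again the moment any later step re-redirects $u_j$, so a later "swap" can undo earlier progress and the walk can cycle; moreover, avoiding previously used tails at step $k$ may require excluding up to $k-1$ vertices of ${\downarrow}\{w_{k-1}\}$, while your bound only guarantees $|{\downarrow}\{w_{k-1}\}|\ge 2$. You supply no potential function or arc-reuse discipline that rules this out.

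It is instructive that the paper does not prove termination of any such process. It formalizes your walks as \emph{alternating sequences} of pairwise distinct arcs (your Case-1 event is its ``good'' sequence, and its Claim 2 is your composite swap, executed in one shot along the whole sequence), and then argues \emph{statically}: assuming no good sequence starts at the bad leaf $v$, it collects the set $B_v$ of all $B$-arcs occurring in some alternating sequence from $v$, shows $B_v$ is a linear branching, and maps each minimal element $u\in L_v$ feeding $B_v$ to the maximal element $m_u\in \max B_v$ reachable from it; the set $(\min D_M\setminus L_v)\cup\{m_u\colon u\in L_v\}$ is then a second antichain of cardinality $|\min D_M|$, contradicting uniqueness of the maximum antichain. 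So the fix for your proof is not to make the augmenting walk terminate but to replace termination by a one-shot structural extraction of this kind; your surplus inequality is the right sort of tool for such an argument (it is the same use of uniqueness, applied to a different set), but as written your proof of the lemma is incomplete.
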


\begin{proof}
It suffices to show that there exists a maximal optimal branching of $D_M$ with at most $\wdt(D_M)$ leaves.
Let $B$ be a maximal optimal branching of $D_M$ and suppose that $B$ has strictly more than $\wdt(D_M)$ leaves. 
Let us show how to modify $B$ to obtain a maximal optimal branching of $D_M$ with fewer leaves. To this end, we first prove the following.

\begin{claim}
\label{clm:inneighbors}
For every vertex $v \in V(D_M) \setminus \min D_M$, we have $|N^-_{D_M}(v)| \geq 2$. 
\end{claim}

\begin{sloppypar}
\begin{claimproof}
Consider a vertex $v \in V(D_M) \setminus \min D_M$ and suppose for contradiction that \hbox{$|N^-_{D_M}(v)| \le 1$.}
Then $|N^-_{D_M}(v)| = 1$, since $v$ is not a minimal element of $D_M$, hence, $v$ is comparable to exactly one element $u \in \min D_M$, which implies that $((\min D_M) \setminus \{u\}) \cup \{v\}$ is a maximum antichain of $D_M$, a contradiction to the fact that $\min D_M$ is the unique maximum antichain of~$D_M$.
\end{claimproof}
\end{sloppypar}

An \emph{alternating sequence} (with respect to $B$) is a sequence of distinct arcs $(u_1,v_1),\ldots,(u_p,v_p) \in A(D_M)$ satisfying the following conditions.
\begin{itemize}
\item For every $\ell \in [p]$ such that $\ell$ is odd, $(u_\ell, v_\ell) \notin B$.
\item For every $\ell \in [p]$ such that $\ell$ is even, $(u_\ell,v_\ell) \in B$.
\item For every $\ell \in [p-1]$ such that $\ell$ is odd, $u_\ell = u_{\ell +1}$.
\item For every $\ell \in [p-1]$ such that $\ell$ is even, $v_\ell = v_{\ell+1}$.
\end{itemize}
The alternating sequence $(u_1,v_1),\ldots,(u_p,v_p)$ is said to \emph{start} in $v_1$; it is of \emph{odd length} if $p$ is odd and of \emph{even length} otherwise. Let us note that, though the arcs of the sequence are assumed to be pairwise distinct, we do not forbid two non-consecutive arcs in the sequence from having a common endvertex.
See \Cref{fig:altseq} for an illustration of an alternating sequence.

 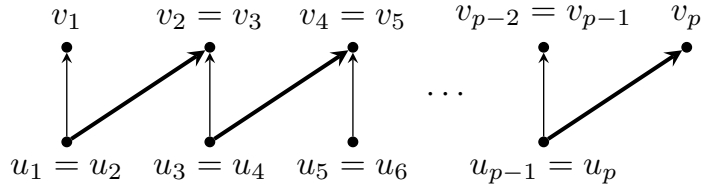
\begin{figure}[htb]
 \centering
 \resizebox{0.6\textwidth}{!}{%
 \begin{tikzpicture}
 \node[mcirc,label=below:{\small $u_1 = u_2$}] (u1) at (0,0) {};
 \node[mcirc,label=above:{\small $v_1$}] (v1) at (0,1) {};
 \node[mcirc,label=below:{\small $u_3 =u_4$}] (u2) at (1.5,0) {};
 \node[mcirc,label=above:{\small $v_2=v_3$}] (v2) at (1.5,1) {};
 \node[mcirc,label=below:{\small $u_5 = u_6$}] (u3) at (3,0) {};
 \node[mcirc,label=above:{\small $v_4 = v_5$}] (v3) at (3,1) {};
 \node[draw=none] at (4,.5) {\small $\cdots$};
 \node[mcirc,label=below:{\small $u_{p-1} = u_p$}] (up2) at (5,0) {};
 \node[mcirc,label=above:{\small $v_{p-2} = v_{p-1}$}] (vp2) at (5,1) {};
 \node[mcirc,label=above:{\small $v_p$}] (vp1) at (6.5,1) {};

 \draw[->,>=stealth] (u1) -- (v1);
 \draw[->,>=stealth] (u2) -- (v2);
 \draw[->,>=stealth] (u3) -- (v3);
 \draw[->,>=stealth] (up2) -- (vp2);
 \draw[very thick,->,>=stealth] (u1) -- (v2);
 \draw[very thick,->,>=stealth] (u2) -- (v3);
 \draw[very thick,->,>=stealth] (up2) -- (vp1);
 \end{tikzpicture}
 }
 \caption{An alternating sequence $(u_1,v_1),\ldots,(u_p,v_p)$ of even length (bold arcs belong to~$B$).}
 \label{fig:altseq}
 \end{figure}

An alternating sequence $(u_1,v_1),\ldots,(u_p,v_p)$ is said to be \emph{good} if $p$ is even and the following two conditions hold.
\begin{itemize}
\item[(i)] $d^-_B(v_1) = 0$ and $d^-_B(v_p) \geq 2$,
\item[(ii)] $d^-_B(v_\ell) = 1$ whenever $2 \leq \ell \leq p-1$.
\end{itemize}

\begin{claim}
\label{clm:altseq}
If there exists a good alternating sequence $(u_1,v_1),\ldots,(u_p,v_p)$, then 
\[(B \setminus \{(u_i,v_i)\colon i \in [p] \text{ is even}\}) \cup \{(u_i,v_i)\colon i \in [p] \text{ is odd}\}\]
is a maximal optimal branching of $D_M$ with fewer leaves than $B$.
\end{claim}

\begin{claimproof}
Assume that $\sigma=(u_1,v_1),\ldots,(u_p,v_p)$ is a good alternating sequence and denote by $B^\sigma = (B \setminus \{(u_i,v_i)\colon i \in [p] \text{ is even}\}) \cup \{(u_i,v_i)\colon i \in [p] \text{ is odd}\}$. 
Clearly, $B^\sigma$ is a maximal branching of $D_M$.
Furthermore, since $v_1$ is a leaf in $B$ but not in $B^\sigma$, and every leaf in $B^\sigma$ is a leaf in $B$, the branching $B^\sigma$ has fewer leaves than $B$. 
It remains to show that $B^\sigma$ is optimal. 
To see this, observe that by condition (ii), any two vertices $v_i$ and $v_j$ with even indices $i\neq j$ are distinct, and therefore
\begin{equation*}
\begin{split}
\lvert U(B)\lvert & = \sum_{v \in V(D_M) \setminus \{v_i\colon i \in [p]\}} \nu_B(v) + \sum_{i \in [p]\colon
\text{$i$ even}} \nu_B(v_i) + \lvert v_1 \rvert,
\end{split}
\end{equation*}
where
\begin{equation*}
\begin{split}
\sum_{i \in [p]\colon
\text{$i$ even}} \nu_B(v_i)
& = \sum_{i \in [p-2]\colon
\text{$i$ even}}\left(\lvert v_i \rvert - \lvert u_i\rvert\right) + \nu_B(v_p) \\
& = \sum_{i \in [p-2]\colon
\text{$i$ even}} \left(\lvert v_{i+1}\rvert - \lvert u_{i-1} \rvert \right)+ \nu_B(v_p)
\end{split}
\end{equation*}
as $v_i = v_{i+1}$ and $u_i = u_{i-1}$ for every $i \in [p-1]$ such that $i$ is even. Since
\begin{equation*}
\begin{split}
\nu_B(v_p) & = \lvert v_p \rvert - \Bigg\lvert \bigcup_{v \in N^-_B(v_p)} v\Bigg\rvert \geq \lvert v_p \rvert - \Bigg\lvert \bigcup_{v \in N^-_B(v_p) \setminus \{u_p\}} v\Bigg\rvert  - \lvert u_p \rvert\\
& = \lvert v_p\vert - \Bigg\lvert \bigcup_{v \in N^-_{B^\sigma}(v_p)} v\Bigg\rvert  - \lvert u_p \rvert = \nu_{B^\sigma}(v_p) - \lvert u_p \rvert
\end{split}
\end{equation*}
and $u_p = u_{p-1}$ by definition, it follows that
\begin{equation*}
\begin{split}
&\sum_{i \in [p-2]\colon
\text{$i$ even}} \left(\lvert v_{i+1}\rvert - \lvert u_{i-1} \rvert \right)+ \nu_B(v_p)
\geq  \sum_{i \in [p-2]\colon\text{$i$ even}} (\lvert  v_{i+1}\rvert   - \lvert  u_{i-1}\rvert)   + \nu_{B^\sigma}(v_p) - \lvert  u_{p-1}\rvert \\
= & \sum_{i \in [p-1]\colon \text{$i$ odd}} (\lvert  v_i\rvert   - \lvert  u_i\rvert)   + \nu_{B^\sigma}(v_p)  - \lvert v_1 \rvert
=  \sum_{i \in [p-1]\colon \text{$i$ odd}} \nu_{B^\sigma}(v_i) + \nu_{B^\sigma}(v_p)  - \lvert v_1 \rvert\,.
\end{split}
\end{equation*}
 Therefore, $\lvert U(B) \rvert$ is at least
\begin{equation*}
\begin{split}
&\sum_{v \in V(D_M) \setminus \{v_i\colon i \in [p]\}} \nu_B(v) + \sum_{i \in [p-1]\colon \text{$i$ odd}} \nu_{B^\sigma}(v_i) + \nu_{B^\sigma}(v_p)=  \lvert  U(B^\sigma)\rvert\,.
\end{split}
\end{equation*}
But $B$ is optimal and thus, $B^\sigma$ is optimal as well.
\end{claimproof}

Consider now a leaf $v \in V(D_M) \setminus \min D_M$ of $B$ (recall that by assumption, such a leaf exists) and denote by $\mathcal{S}_v$ the set of alternating sequences starting in $v$. 
Note that since $d^-_B(v) = 0$ and $v \notin \min D_M$, we infer that $\mathcal{S}_v \neq \emptyset$ (a single arc $e \in A(D_M) \setminus B$ with head $v$ indeed constitutes an alternating sequence). 
We will show that there exists a good alternating sequence in $\mathcal{S}_v$.
This will complete the proof, as by \Cref{clm:altseq}, we can then construct an optimal branching of $D_M$ with fewer leaves than $B$. 

Suppose for a contradiction that no good alternating sequence exists in $\mathcal{S}_v$. 
In the following, an alternating sequence $\sigma \in \mathcal{S}_v$ is said to be \emph{maximal} if $\sigma$ is not a proper subsequence of any alternating sequence in $\mathcal{S}_v$.
Let $\sigma=(u_1,v_1),\ldots,(u_p,v_p)$ be a maximal alternating sequence in $\mathcal{S}_v$ (note that $v_1 = v$). 

We claim that there exists no index $i \in [p]$ such that $d^-_B(v_i) \geq 2$. 
Indeed, suppose for a contradiction that such indices exist and let $i \in [p]$ be the smallest such index.
Then $i$ is even, for if $i$ is odd, then $i\ge 3$ and $v_i = v_{i-1}$ by definition, a contradiction to the minimality of $i$.
Furthermore, by minimality of $i$, $d^-_B(v_\ell) = 1$ for every $2 \leq \ell \leq i-1$. 
It follows that $(u_1,v_1),\ldots,(u_i,v_i)$ is a good alternating sequence in $\mathcal{S}_v$, a contradiction to our assumption. 
Thus, $d^-_B(v_i) \leq 1$ for every $i \in [p]$. 

We now claim that $p$ is odd. Indeed, suppose to the contrary that $p$ is even (note that in particular $p \geq 2$). 
Then $(u_p,v_p) \in B$ by definition and since $d^-_B(v_p) \leq 1$ by the above, it follows from \Cref{clm:inneighbors} that there exists a vertex $u \in N^-_{D_M}(v_p) \setminus \{u_p\}$. 
We claim that the arc $(u,v_p)$ is not contained in $\sigma$. 
Indeed, if there exists $\ell \in [p]$ such that $(u,v_p) = (u_\ell, v_\ell)$, then $\ell$ is odd, since $(u,v_p) \notin B$, and strictly larger than one, since $d^-_B(v_1) = 0$; but then, $(u_{\ell -1},v_{\ell -1}) = (u_p,v_p)$ as $d^-_B(v_p) \leq 1$, a contradiction to the fact that the arcs of $\sigma$ are pairwise distinct. 
Thus, $(u,v_p) \notin \sigma$ and so, $(u_1,v_1),\ldots,(u_p,v_p),(u,v_p)$ is an alternating sequence in $\mathcal{S}_v$, contradicting the maximality of $\sigma$. 
Therefore, $p$ is odd.

Since $p$ is odd, $(u_p,v_p) \notin B$. Now since $u_p \notin \max D_M$ and $B$ is maximal, there must exist a vertex $u \in N^+_{D_M}(u_p)$ such that $(u_p,u) \in B$; but $\sigma$ is maximal and so, the arc $(u_p,u)$ must be contained in $\sigma$. 
Thus, we have proven the following.

\begin{claim}
\label{obs:maxseq}
For every maximal alternating sequence $\sigma=(u_1,v_1),\ldots,(u_p,v_p)$ in $\mathcal{S}_v$, the following hold.
\begin{itemize}
\item[(i)]\label{item1} For every $i \in [p]$, $d^-_B(v_i) \leq 1$.
\item[(ii)]\label{item2} For every $i \in [p]$, $\sigma$ contains the arc $(u_i,u)$ where $u$ is the out-neighbor of $u_i$ in $B$.
\end{itemize}
\end{claim}

Now let $B_v \subseteq B$ be the set of arcs $(u,w) \in B$ such that there exists an alternating sequence in $\mathcal{S}_v$ containing the arc $(u,w)$ (note that if there exists an alternating sequence in $\mathcal{S}_v$ containing $(u,w)$ then, a fortiori, there exists a maximal alternating sequence in $\mathcal{S}_v$ containing $(u,w)$). 

We claim that $B_v$ is a linear branching. 
Indeed, suppose that there exists a vertex $w \in V(D_M)$ such that $d^-_{B_v}(w) \geq 2$ and let $u \in N^-_{B_v}(w)$ be an in-neighbor of $w$ in $B_v$ (that is, $(u,w) \in B_v$). 
Then, by construction, there exists a maximal alternating sequence in $\mathcal{S}_v$ containing the arc $(u,w)$; but $d^-_B(w) \geq d^-_{B_v}(w) \geq 2$, a contradiction to \Cref{obs:maxseq}(i). 

Now let $L_v \subseteq \min D_M$ be the set of vertices of out-degree one in $B_v$, that is, $u \in L_v$ if and only if $u$ is a minimal element of $D_M$ and there exists $w \in V(D_M)$ such that $(u,w) \in B_v$. 
Note that $L_v \neq \emptyset$ as it contains, in particular, every minimal element of $D_M$ comparable to $v$. 
For every vertex $u \in L_v$, let $m_u \in \max B_v$ be the maximal element reachable from $u$ in $B_v$, that is, there is a path from $u$ to $m_u$ using only arcs in $B_v$ and $m_u$ is the tail of no arcs in $B_v$. 
Note that $m_u \neq m_w$ for any two distinct vertices $u,w \in L_v$, as $B_v$ is a linear branching; in particular, $|\{m_u\colon u \in L_v\}| = |L_v|$. 
Denoting by $M_v = \{m_u\colon u \in L_v\}$, we will show that $(\min D_M \setminus L_v) \cup M_v$ is an antichain of $D_M$. 
Since this set has size $|\min D_M \setminus L_v| + |M_v| = |\min D_M \setminus L_v| + |L_v| = |\min D_M|$ as shown above, this contradicts the fact that $\min D_M$ is the unique maximum antichain of $D_M$ and thus concludes the proof.

Let us first show that the elements in $M_v$ are pairwise incomparable. 
Suppose to the contrary that there exist $m_u,m_w \in M_v$ such that $(m_u,m_w) \in A(D_M)$, and let $x \in V(D_M)$ be the in-neighbor of $m_w$ in $B_v$. 
Then $m_u \neq x$ for if $m_u = x$ then $m_u \notin \max B_v$, a contradiction to the definition of $m_u$. 
Furthermore, since the arc $(x,m_w)$ is contained in a maximal alternating sequence in $\mathcal{S}_v$ by construction, $(m_u,m_w) \notin B$ by \Cref{obs:maxseq}(i).
Now let $(u_1,v_1),\ldots,(u_p,v_p)$ be an alternating sequence in $\mathcal{S}_v$ ending with the arc $(x,m_w)$.
Note that $p$ is even as $(x,m_w) \in B$. 
Then $(u_1,v_1),\ldots,(u_p, v_p),(m_u,m_w)$ is an alternating sequence in $\mathcal{S}_v$; in particular, it is contained in some maximal alternating sequence $\sigma$ in $\mathcal{S}_v$. 
But then, by \Cref{obs:maxseq}(ii), $\sigma$ contains the arc $(m_u,y)$ where $y$ is the out-neighbor of $m_u$ in $B$, a contradiction to the fact that $m_u \in \max B_v$. Thus, the elements in $M_v$ are pairwise incomparable.

Second, let us show that every element in $\min D_M \setminus L_v$ is incomparable to every element in $M_v$. Suppose to the contrary that there exists $m \in \min D_M \setminus L_v$ and $m_u \in M_v$ such that $(m,m_u) \in A(D_M)$.
Note that $m_u\neq u$ by the definition of $L_v$ and hence $m_u$ has a (unique) in-neighbor in $B_v$. 
Let $x \in V(D_M)$ be the in-neighbor of $m_u$ in $B_v$ (note that since $m$ is the tail of no arc in $B_v$, necessarily $m \neq x$). 
By construction, there exists a maximal alternating sequence $(u_1,v_1),\ldots,(u_p,v_p)$ in $\mathcal{S}_v$ containing the arc $(x,m_u)$, say $(x,m_u) = (u_\ell, v_\ell)$ for some $\ell \in [p]$ (note that $\ell$ is even as $(x,m_u) \in B$). 
Since $(m,m_u) \notin B$ by \Cref{obs:maxseq}(i), it follows that $(u_1,v_1),\ldots,(u_\ell, v_\ell),(m,m_u)$ is an alternating sequence in $\mathcal{S}_v$; in particular, it is contained in some maximal alternating sequence $\sigma$ in $\mathcal{S}_v$. But then, by \Cref{obs:maxseq}(ii), $\sigma$ contains the arc $(m,y)$ where $y$ is the out-neighbor of $m$ in $B$, a contradiction to the fact that $m \notin L_v$, which concludes the proof.
\end{proof}

Given an integer $p\ge 2$, an \emph{in-star of size $p$}\label{page-in-star} in $D_M$ is a sequence $S = (c,x_1,\ldots,x_p)$ of $p+1$ pairwise distinct vertices of $D_M$ such that $(x_i,c) \in A(D_M)$ for every $i \in [p]$.
We say that two in-stars $(c_1,x^1_1,\ldots,x^1_{p_1})$ and $(c_2,x^2_1,\ldots,x^2_{p_2})$ are \emph{independent} if $c_1 \neq c_2$ and $\{x_1^1,\ldots,x_{p_1}^1\} \cap \{x_1^2,\ldots,x_{p_2}^2\} = \emptyset$. Note that it is possible that $c_1\in \{x_1^2,\ldots,x_{p_2}^2\}$ or $c_2\in \{x_1^1,\ldots,x_{p_1}^1\}$ (but not both).
See \Cref{fig:instars} for an example.

\begin{figure}[ht!]
     \centering
     \includegraphics[width=0.4\linewidth]{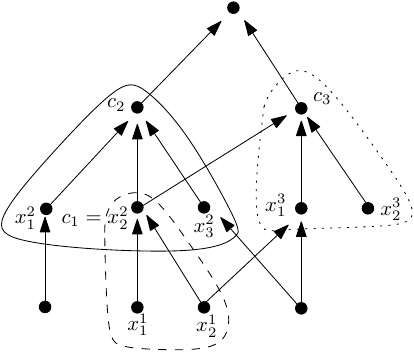}
     \caption{ An example of a containment digraph and 
     three pairwise independent in-stars with $p_1=2$, $p_2 = 3$, and $p_3 = 2$ depicted using dashed, normal, and dotted lines, respectively. 
     Note that the arcs that follow from transitivity are not shown; however, such arcs could also be a part of an in-star.}
     \label{fig:instars}
 \end{figure}

Let $\mathcal{S} = \{(c_i,x^i_1,\ldots,x^i_{p_i})\colon i \in [\ell]\}$ be a set of pairwise independent in-stars of $D_M$. 
Let $C = \{c_i\colon i \in [\ell]\}$ be the set of sinks of the in-stars, $X= \{x_j^i\colon i \in [\ell] \text{ and } j \in [p_i]\}$ be the set of sources of the in-stars, and $A(\mathcal{S}) = \{(x^i_j,c_i) \colon i \in [\ell] \text{ and } j \in [p_i]\}$ be the set of arcs of the in-stars.
A \emph{linear completion}\label{page-linear-completion} of $\mathcal{S}$ is a set $L \subseteq A(D_M)\setminus A(\mathcal{S})$ of arcs such that for every $u \in V(D_M)$, the following hold.
\begin{itemize}
\item If $u \notin C \cup X$ then $u$ is the tail of exactly one arc in $L$ (unless $u \in \max D_M$, in which case $u$ is the tail of no arc in $L$) and $u$ is the head of exactly one arc in $L$ (unless $u \in \min D_M$, in which case $u$ is the head of no arc in $L$).
\item If $u \in C \setminus X$ then $u$ is the tail of exactly one arc in $L$ (unless $u \in \max D_M$) and the head of no arc in $L$.
\item If $u \in X \setminus C$ then $u$ is the head of exactly one arc in $L$ (unless $u \in \min D_M$) and the tail of no arc in $L$.
\item If $u \in C \cap X$ then $u$ is the head of no arc in $L$ and the tail of no arc in $L$.
\end{itemize} 
 See \Cref{fig:lincompletion} for an example of a set of pairwise independent in-stars of $D_M$ that admits a linear completion and one that does not.

\begin{figure}[ht!]
    \centering
    \includegraphics[width=0.75\linewidth]{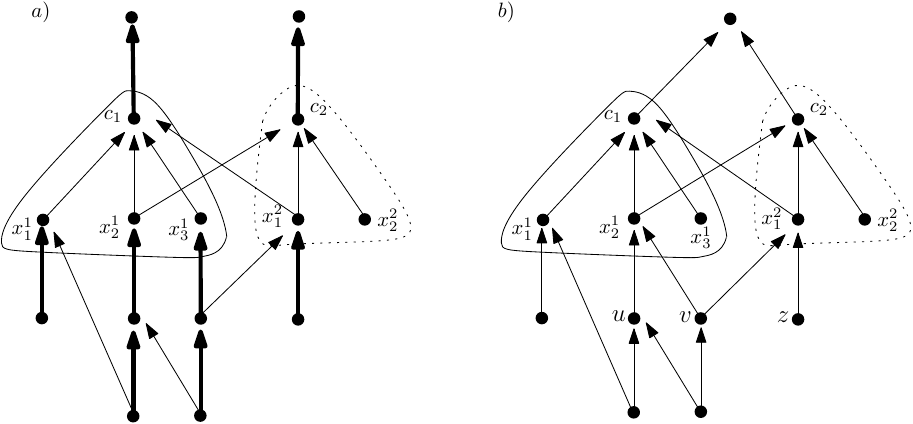}
     \caption{Two examples of a containment digraph (with the arcs that follow from transitivity not shown).
     In part a), two pairwise independent in-stars $(c_1,x_1^1,x_2^1,x_3^1)$ and $(c_2,x_1^2,x_2^2)$ are shown, admitting a linear completion represented by the bold arcs.
     On the other hand, the two pairwise independent in-stars depicted in part b) do not admit a linear completion, since the only outgoing arcs from $c_1$ and $c_2$ have the same head.
     It is also not possible to select one outgoing arc from each of $u$, $v$, and $z$ while respecting all the conditions from the definition of a linear completion.
     We emphasize that the arcs that follow from transitivity could also be part of a linear completion.}
  \label{fig:lincompletion}
 \end{figure}

In other words, $A(\mathcal{S}) \cup L$ is a maximal branching of $D_M$ where all vertices of in-degree at least two belong to~$C$.

In the following lemma, we consider the case when the only maximum antichain is the set of minimal elements (cf.~\Cref{lem:inter}) and show that in this case there exists a set of pairwise independent in-stars that admits a linear completion.
For a branching $B$ of $D_M$, we denote by $\mathcal{S}_B$ the set of in-stars formed by the vertices of in-degree at least two in $B$ and their in-neighborhoods (in $B$).
Note that any two in-stars in $\mathcal{S}_B$ are independent.

\smallskip
\begin{lemma}
\label{lem:forall}
Let $D_M$ be a containment digraph such that $\min D_M$ is the only maximum antichain of $D_M$.
Let $B$ be a maximal optimal branching $B$ of $D_M$ whose leaves are exactly the minimal elements of $D_M$.
Then, the following hold.
\begin{itemize}
\item[(i)] $\mathcal{S}_B$ admits a linear completion.
\item[(ii)] For any linear completion $L$ of $\mathcal{S}_B$, the set $A(\mathcal{S}_B) \cup L$ is an optimal branching of $D_M$.
\end{itemize}
\end{lemma}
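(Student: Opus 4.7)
The strategy for (i) is to exhibit a concrete linear completion, namely $L^\ast := B \setminus A(\mathcal{S}_B)$. Let $C$ denote the set of centers of the in-stars in $\mathcal{S}_B$ (equivalently, the set of vertices of in-degree $\ge 2$ in $B$), and let $X$ denote the set of their $B$-in-neighbors (equivalently, the set of sources of the in-stars). To check the four cases in the definition of a linear completion for $L^\ast$, I would rely on two observations: every vertex of in-degree $\ge 2$ in $B$ lies in $C$, so an incoming arc in $B$ of a vertex outside $C$ cannot be an in-star arc and hence belongs to $L^\ast$; dually, all in-star arcs have their tails in $X$, so an outgoing arc in $B$ from a vertex outside $X$ belongs to $L^\ast$. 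Combined with the assumption that the leaves of $B$ are exactly the minimal elements of $D_M$, these observations settle the four cases (vertices outside $C \cup X$, in $C \setminus X$, in $X \setminus C$, and in $C \cap X$) essentially by inspection.

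For (ii), the plan is to prove the stronger assertion that $|U(A(\mathcal{S}_B) \cup L)|$ is the same for every linear completion $L$ of $\mathcal{S}_B$. Since $A(\mathcal{S}_B) \cup L^\ast = B$ is optimal, this stronger statement immediately implies (ii). After noting that $A(\mathcal{S}_B) \cup L$ really is a branching (each source of an in-star has out-degree one from the in-star and zero from $L$, and every other vertex has out-degree at most one from $L$), I would split the uncovered-pair count according to whether the second coordinate lies in $C$. For $v \in C$, the in-neighborhood of $v$ in $A(\mathcal{S}_B) \cup L$ is the one prescribed by the in-star, equal to $N^-_B(v)$, so $\nu_{A(\mathcal{S}_B) \cup L}(v) = \nu_B(v)$, independently of $L$. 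For $v \notin C$, the in-degree of $v$ in $A(\mathcal{S}_B) \cup L$ is at most one (no in-star arc is incident to $v$, and $L$ contributes at most one incoming arc to $v$), and therefore
\[
\sum_{v \notin C} \nu_{A(\mathcal{S}_B) \cup L}(v) \;=\; \sum_{v \in V(D_M) \setminus C} |v| \;-\; \sum_{(u,v) \in L} |u|.
\]

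The crux is then that the sum on the right is an invariant of $\mathcal{S}_B$, not of $L$: the linear-completion definition forces $u$ to be the tail of exactly one arc of $L$ precisely when $u \in (V(D_M) \setminus X) \setminus \max D_M$, so $\sum_{(u,v) \in L} |u| = \sum_{u \in (V(D_M) \setminus X) \setminus \max D_M} |u|$. This identifies the same total $|U(A(\mathcal{S}_B) \cup L)|$ for every choice of $L$, giving (ii). The main obstacle I anticipate is the case analysis in (i) and the accompanying bookkeeping across the four classes of vertices; the conceptual content of (ii), namely that the uncovered-pair count depends on $L$ only through the multiset of tails of $L$-arcs, which is already forced by the definition, is short once it is isolated.
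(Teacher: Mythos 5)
Your proposal is correct and follows essentially the same route as the paper: for (i) you exhibit the identical witness $B \setminus A(\mathcal{S}_B)$ with the same four-case inspection, and for (ii) your invariance argument (the count depends on $L$ only through the tails of $L$-arcs, which the definition forces to be exactly $V(D_M) \setminus (X \cup \max D_M)$) is the same bookkeeping the paper carries out, merely packaged as ``all linear completions give the same value, anchored at $L^\ast$'' instead of the paper's two parallel closed-form computations of $\lvert U(B)\rvert$ and $\lvert U(A(\mathcal{S}_B) \cup L)\rvert$.
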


\begin{proof}
Let $\mathcal{S}_B = \{(c_i,x^i_1,\ldots,x^i_{p_i})\colon  i \in [\ell]\}$. Further denote by $C = \{c_i\colon i\in [\ell]\}$ and by $X = \{x^i_j\colon  i \in [\ell] \text{ and } j \in [p_i]\}$. 
To prove (i), let us show that $B^L = B \setminus A(\mathcal{S}_B)$ is a linear completion of $\mathcal{S}_B$.
Consider a vertex $u \in V(D_M)$ and suppose first that $u \notin C \cup X$. 
If $u \notin \max D_M$ then $u$ is the tail of exactly one arc in $B^L$ by maximality of $B$; and if $u \notin \min D_M$ then $u$ is the head of exactly one arc in $B^L$ as the set of leaves of $B$ is exactly $\min D_M$. 
Using similar arguments, it is not difficult to verify that the properties imposed on $u$ as in the definition of linear completion also hold in the remaining three cases.

To prove (ii), observe that by construction, the only vertices of in-degree at least two in $B$ are those in $C$ and thus, the restriction of $B$ to $V(D_M) \setminus C$ is a linear branching. 
Furthermore, since the set of leaves of $B$ is exactly $\min D_M$, we in fact have $d^-_B(u) = 1$ for every $u \in V(D_M) \setminus (C \cup \min D_M)$; we denote by $u_{in}$ the unique in-neighbor of $u$ in $B$.
Since $B$ is maximal, it follows that
\begin{equation*}
\begin{split}
\lvert U(B)\rvert &= \sum_{u \in V(D_M) \setminus C} \nu_B(u) + \sum_{u \in C} \nu_B(u)\\
&= \sum_{u \in \min D_M} \lvert  u\rvert   + \sum_{u \in V(D_M) \setminus (C \cup \min D_M)} \Bigg(\lvert  u\rvert   - \lvert  u_{in} \rvert \Bigg) + \sum_{u \in C}\Bigg(\lvert  u\rvert   - \Bigg\lvert  \bigcup_{v \in N^-_B(u)} v\Bigg\rvert \Bigg)\\
&= \sum_{u \in V(D_M)} \lvert  u\rvert - \sum_{u \in V(D_M) \setminus (C \cup \min D_M)} \lvert  u_{in}\rvert
 - \sum_{u \in C} \Bigg\lvert  \bigcup_{v \in N^-_B(u)} v \Bigg\rvert \\
&= \sum_{u \in V(D_M)} \lvert  u\rvert   - \sum_{u \in V(D_M) \setminus \max D_M\colon N^+_B(u) \cap C = \emptyset} \lvert  u\rvert  - \sum_{u \in C} \Bigg\lvert  \bigcup_{v \in N^-_B(u)} v \Bigg\rvert\\
&= \sum_{u \in \max D_M} \lvert  u\rvert + \sum_{u \in V(D_M)\colon N^+_B(u) \cap C \neq \emptyset} \lvert  u\rvert - \sum_{u \in C} \Bigg\lvert  \bigcup_{v \in N^-_B(u)} v \Bigg\rvert  \\
&=  \sum_{u \in \max D_M} \lvert  u\rvert   + \sum_{u \in X} \lvert  u\rvert   - \sum_{u \in C} \Bigg\lvert  \bigcup_{v \in N^-_B(u)} v \Bigg\rvert\,.  
\end{split}
\end{equation*}

Now let $L$ be a linear completion of $\mathcal{S}_B$. Then, by construction, $A(\mathcal{S}_B) \cup L$ is a maximal branching of $D_M$; and since the only vertices of in-degree at least two in $A(\mathcal{S}_B) \cup L$ are those in $C$, we conclude similarly that $$\lvert U(A(\mathcal{S}_B) \cup L)\rvert = \sum_{v \in X \cup \max D_M} \lvert v \rvert - \sum_{v \in C} \Bigg\lvert \bigcup_{u \in N^-_B(v)} u \Bigg\rvert.$$ Therefore, $A(\mathcal{S}_B) \cup L$ is an optimal branching of $D_M$.
\end{proof}

We remark that the above proof in fact only needs the assumption that the set of minimal elements of $D_M$ is a maximum antichain of $D_M$; uniqueness is not needed.

Next, we give an efficient algorithm for the problem of testing if a given set of pairwise independent in-stars admits a linear completion, by reducing it to the bipartite matching problem.

\smallskip
\begin{lemma}
\label{lem:polylincomp}
Let $D_M$ be an $n$-vertex containment digraph and let $\mathcal{S}$ be a set of pairwise independent in-stars of $D_M$. 
Then, deciding whether $\mathcal{S}$ admits a linear completion and computing a linear completion of $\mathcal{S}$ if there is one, can be done in time $\mathcal{O}(n^{2+o(1)}$).
\end{lemma}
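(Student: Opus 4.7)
The plan is to reduce the problem to computing a perfect matching in an auxiliary bipartite graph obtained from $D_M$ and $\mathcal{S}$. Let $C = \{c_i : i \in [\ell]\}$ and $X = \{x^i_j : i \in [\ell],\, j \in [p_i]\}$ be the sinks and sources of the in-stars, and set
\[
T = V(D_M) \setminus (X \cup \max D_M), \qquad H = V(D_M) \setminus (C \cup \min D_M).
\]
By going through the four cases in the definition of a linear completion, one verifies that $T$ is exactly the set of vertices that must be the tail of precisely one arc of $L$ in any linear completion $L$, that $H$ is exactly the set of vertices that must be the head of precisely one arc of $L$, and that every other vertex of $D_M$ must be an endpoint of no arc of $L$. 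In particular, $|T| = |H|$ is a necessary condition for a linear completion to exist.

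Next, I would build a bipartite graph $G$ whose left shore is a copy of $T$ and whose right shore is a copy of $H$ (so a vertex of $T \cap H$ contributes two distinct vertices to $G$, one on each side), with an edge between $u \in T$ and $v \in H$ whenever $(u,v) \in A(D_M) \setminus A(\mathcal{S})$. The correspondence between linear completions of $\mathcal{S}$ and perfect matchings of $G$ is then immediate from the characterization above: every perfect matching of $G$ yields a set of arcs $L \subseteq A(D_M) \setminus A(\mathcal{S})$ satisfying all four defining properties of a linear completion, and every linear completion arises in this way. Hence $\mathcal{S}$ admits a linear completion if and only if $G$ has a perfect matching, and any such matching is a linear completion.

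For the running time, $G$ has $O(n)$ vertices and at most $|A(D_M)| = O(n^2)$ edges. Using the standard reduction from bipartite matching to $s$-$t$ maximum flow in a unit-capacity network of comparable size, and invoking the almost-linear-time max-flow algorithm of van den Brand et al.~\cite{maxflowlinear} (already used in \Cref{thm:max-weight-antichain}), the matching can be computed in time $\mathcal{O}(n^{2+o(1)})$, yielding the claimed bound. The only genuinely new ingredient is the case analysis establishing that $T$ and $H$ correctly capture the requirements on tails and heads of $L$; once this is verified, the remainder is a direct reduction to a well-studied problem.
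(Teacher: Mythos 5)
Your proposal is correct and follows essentially the same route as the paper: you construct the identical bipartite graph (your $T$ and $H$ are exactly the paper's shores $O = V(D_M) \setminus (X \cup \max D_M)$ and $I = V(D_M) \setminus (C \cup \min D_M)$), establish the same bijection between linear completions and perfect matchings via the same case analysis, and invoke the same reduction to almost-linear-time maximum flow. The only cosmetic difference is your explicit exclusion of the arcs $A(\mathcal{S})$ from the edge set, which is redundant since no tail of an in-star arc lies in $T$ and no head lies in $H$.
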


\begin{proof}
We show that the problem of deciding whether $\mathcal{S}$ admits a linear completion and computing a linear completion of $\mathcal{S}$ if there is one can be reduced in time $\mathcal{O}(n^2)$ to the perfect matching problem in a bipartite graph with $\mathcal{O}(n)$ vertices.
In particular, since the perfect matching problem can be reduced in linear time to a maximum flow problem, this yields the claimed time complexity by applying the maximum flow algorithm of van den Brand et al.~\cite{maxflowlinear}.

Let $\mathcal{S} = \{(c_i,x^i_1,\ldots,x^i_{p_i})\colon i \in [\ell]\}$. 
Denoting by $C = \{c_i\colon i\in [\ell]\}$ and by $X = \{x^i_j\colon  i \in [\ell] \text{ and } j \in [p_i]\}$, let $G= (O \cup I, E)$ be the bipartite graph where
\begin{itemize}
\item $O = V(D_M) \setminus (X \cup \max D_M)$,
\item $I = V(D_M) \setminus (C \cup \min D_M)$, and
\item a vertex $u \in O$ is adjacent to a vertex $v \in I$ if and only if $(u,v) \in A(D_M)$.
\end{itemize}
Observe that the graph $G$ has $\mathcal{O}(n)$ vertices and can be computed in $\mathcal{O}(n^2)$ time. 
We show below that
$\mathcal{S}$ has a linear completion if and only if  $G$ has a perfect matching.

Assume first that $\mathcal{S}$ has a linear completion $L$ and let $M = \{uv\colon u \in O, v \in I, (u,v) \in L\}$. 
Let us show that $M$ is a perfect matching of $G$. 
Consider first a vertex $u \in O$. Since $u \notin X \cup \max D_M$, $u$ is the tail of exactly one arc $(u,v) \in L$. 
Furthermore, $v \notin \min D_M$ (since otherwise the in-degree of $v$ in $D_M$ would be zero) and $v \notin C$ (for otherwise $v$ would be the head of an arc in $L$, a contradiction to the definition of $L$). 
It follows that $v \in I$ and thus, $uv \in M$ by construction. 
Second, consider a vertex $u \in I$. Since $u \notin C \cup \min D_M$, $u$ is the head of exactly one arc $(v,u) \in L$. 
Furthermore, $v \notin \max D_M$ (since otherwise the out-degree of $v$ in $D_M$ would be zero) and $v \notin X$ (for otherwise $v$ would be the tail of an arc in $L$, a contradiction to the definition of $L$). 
It follows that $v \in O$ and thus, $vu \in M$ by construction. Therefore, $M$ is a perfect matching of $G$.

Conversely, assume that $G$ has a perfect matching $M$ and let $L = \{(u,v) \in A(D_M)\colon u \in O, v \in I, uv \in M\}$. 
Let us show that $L$ is a linear completion of $\mathcal{S}$. 
To this end, consider a vertex $u \in V(D_M)$. 
Assume first that $u \notin C \cup X$. 
If $u \notin \max D_M$, then $u \in O$ and since $M$ is a perfect matching, $u$ is matched in $M$ to a vertex $v \in I$; by construction, the arc $(u,v)$ belongs to $L$ and thus $u$ is the tail of exactly one arc in $L$. 
Similarly, if $u \notin \min D_M$, then $u \in I$ and since $M$ is a perfect matching, $u$ is matched in $M$ to a vertex $v \in O$; by construction, the arc $(v,u)$ belongs to $L$ and thus, $u$ is the head of exactly one arc in $L$. 
Assume next that $u \in C \setminus X$. 
Then $u \notin I$ and thus, $u$ is the head of no arc in $L$. 
Now if $u \in \max D_M$ then $u \notin O$ and $u$ is the head of no arc in $L$. 
Otherwise $u \in O$ and since $M$ is a perfect matching, $u$ is matched in $M$ to a unique vertex $v \in I$; by construction, the arc $(u,v)$ belongs to $L$ and thus, $u$ is the tail of exactly one arc in $L$. 
Assume next that $u \in X \setminus C$. 
Then $u \notin O$ and thus, $u$ is the tail of no arc in $L$. 
Now if $u \in \min D_M$ then $u \notin I$ and $u$ is the head of no arc in $L$. 
Otherwise, $u \in I$ and since $M$ is a perfect matching, $u$ is matched in $M$ to a unique vertex $v \in O$; by construction, the arc $(v,u)$ belongs to $L$ and thus, $u$ is the head of exactly one arc in $L$. 
Finally, if $u \in C \cap X$ then $u \notin O \cup I$ and thus, $u$ is the head and the tail of no arc in $L$. 
Therefore, $L$ is a linear completion of $\mathcal{S}$, as claimed.
\end{proof}

We now have everything ready to prove our main result.

\smallskip
\begin{theorem}
\label{thm:boundedwidth}
For any binary matrix $M\in \{0,1\}^{m \times n}$, the optimum value $\beta(D_M)$ can be computed in time 
$\mathcal{O}(n^{\wdt(D_M) \cdot (\wdt(D_M) +1)}\cdot (\wdt(D_M)^{4}+ n^{2+o(1)})+n^2m)$.
\end{theorem}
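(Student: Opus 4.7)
My plan is to combine the structural results of \Cref{lem:reduce}, \Cref{lem:inter}, \Cref{lem:forall}, and \Cref{lem:polylincomp} into a three-stage procedure whose exponential cost is confined to an enumeration that fits inside roughly $k(k+1)$ vertex slots, where $k=\wdt(D_M)$. I would first build $D_M$ in time $\mathcal{O}(n^2m)$, then spend $\mathcal{O}(n^{2+o(1)})$ to compute $k$ and to select a ``highest'' maximum antichain $N$ of $D_M$, i.e.\ a maximum antichain with the property that $N$ is the \emph{unique} maximum antichain of $D_M[N\cup V_N^+]$. Such an $N$ always exists (the maximum antichains of a DAG form a lattice with a unique top element) and can be obtained either by iteratively pushing up any max antichain or by running \textsc{Maximum Weight Antichain} with weights biased towards vertices with large downward closures. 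By \Cref{lem:reduce} it then suffices to compute $\beta(D')$, where $D'=D_M[N\cup V_N^+]$, and by construction $D'$ satisfies the hypothesis of \Cref{lem:inter}.

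On $D'$, \Cref{lem:inter} provides a maximal optimal branching $B$ whose leaves are exactly the $k$ minimal elements, and \Cref{lem:forall} tells us that $B$ is fully specified by its associated set of in-stars $\mathcal{S}_B$ together with any linear completion. Before enumerating, I would establish two size bounds on $\mathcal{S}_B$. First, counting in-degrees, the total number of arcs of $B$ equals $|V(D')|-|\max D'|$; non-minimal vertices outside the in-star centers contribute in-degree exactly one, so the excess $\sum_{c\in C}(d^-_B(c)-1)$ is at most $|\min D'|-|\max D'|\le k$, giving $|\mathcal{S}_B|\le k$. Second, we may assume without loss of optimality that the in-neighbors of each center in $B$ form an antichain of $D'$: if two of them were comparable, say $x_1\subset x_2$, the swap $(x_1,c)\mapsto(x_1,x_2)$ leaves $B$ maximal and does not increase the number of uncovered pairs (since $x_1\subset x_2$ contributes nothing new at $c$, but may newly cover elements at $x_2$). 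Hence each in-star has at most $k$ sources, and $\mathcal{S}_B$ involves at most $k(k+1)$ vertices overall.

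With these bounds in place, the algorithm enumerates all collections $\mathcal{S}$ of pairwise independent in-stars in $D'$ with at most $k$ in-stars and at most $k$ sources per in-star; there are at most $n^{k(k+1)}$ such collections. For each $\mathcal{S}$, I would apply \Cref{lem:polylincomp} to decide in time $\mathcal{O}(n^{2+o(1)})$ whether a linear completion exists, and if so use the closed-form cost from the proof of \Cref{lem:forall}(ii),
\[
\sum_{u\in X\cup\max D'}|u|\;-\;\sum_{c\in C}\Bigl|\bigcup_{u\in N^-_{A(\mathcal{S})}(c)}u\Bigr|,
\]
which depends only on $\mathcal{S}$ and is computable in $\mathcal{O}(k^4)$ time, then return the minimum over all feasible guesses. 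Correctness is immediate: $\mathcal{S}_B$ from an optimal $B$ as above is among the guesses, and the formula then returns $|U(B)|=\beta(D')=\beta(D_M)$.

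The step I expect to be most delicate is the first one, namely the selection of a maximum antichain $N$ that remains unique after restriction, so that \Cref{lem:inter} can be legitimately invoked; everything else (in-star bound, matching-based linear-completion test, closed-form cost) follows directly from the lemmas already established. My workaround is to work with the topmost maximum antichain, either by iterating a swap argument that strictly pushes $N$ upward or by building height-preference into the weights of a max-weight antichain call. Once this reduction is available, the running time aggregates to $\mathcal{O}(n^{k(k+1)}\cdot(k^4+n^{2+o(1)})+n^2m)$, matching the bound in the statement.
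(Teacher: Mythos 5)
Your proposal is correct and takes essentially the same route as the paper's proof: choose a maximum antichain $N$ via a single \textsc{Maximum Weight Antichain} call with a secondary weight that is strictly monotone under proper containment (the paper uses $w'(v)=Q+|v|$, which forces $N=\min D'$ to be the unique maximum antichain of $D'=D_M[N\cup V_N^+]$, exactly what \Cref{lem:inter} needs), then enumerate the at most $n^{\wdt(D')\cdot(\wdt(D')+1)}$ candidate in-star sets and test each for a linear completion by bipartite matching via \Cref{lem:polylincomp}. Your two local deviations are both sound but worth noting: the paper obtains $\ell\le \wdt(D')$ and $p_i\le \wdt(D')$ directly from the arc-counting identity $\wdt(D')=|\max D'|-\ell+\sum_i p_i$ (your own degree count already yields this, making the exchange argument --- which also needs a brief termination remark, e.g.\ that each swap strictly decreases $\sum_{(u,v)\in B}|v|$ --- unnecessary), and your closed-form cost evaluation, while valid for any feasible guess since the derivation in \Cref{lem:forall}(ii) uses only the degree structure, requires computing union sizes of row sets and thus costs $\mathcal{O}(k^{2}m)$-type time rather than $\mathcal{O}(k^{4})$, the latter being the cost of the pairwise-independence check.
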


\begin{proof}
We define a weight function $w$ on $V(D_M)$ as follows: for every $u \in V(D_M)$, $w(u) = |u|$. 
We claim that the following algorithm computes $\beta(D_M)$ in the stated time.

\medskip
\noindent
\textbf{Algorithm}
\begin{itemize}
\item[1.] Compute the containment digraph $D_M$ and the weight function $w$.
\item[2.] Compute a maximum antichain $N$ of $D_M$ of maximum total weight with respect to~$w$.
\item[3.] Set $D = D_M[N \cup V_N^+]$ and initialize $\mathsf{OPT} = \infty$.
\item[4.] For every set $\mathcal{S}$ of pairwise independent in-stars of $D$ such that $|\mathcal{S}| \leq \wdt(D)$ and every in-star in $\mathcal{S}$ has size at most $\wdt(D)$ do:
\begin{itemize}
\item[4.1] If $\mathcal{S}$ admits a linear completion $L$ then $\mathsf{OPT} = \min (\mathsf{OPT}, |U(A(\mathcal{S}) \cup L)|)$.
\end{itemize}
\item[5.] Return $\mathsf{OPT}$.
\end{itemize}

\noindent
\textbf{Correctness.} Observe first that if $N = \max D_M$ then $|\max D_M| = \wdt(D_M)$ and so, $\beta(D_M) = \sum_{v \in N} |v|$ by \Cref{thrm:wdtmaxel}. 
Assume henceforth that $N \neq \max D_M$ and let us show that the above algorithm computes $\beta(D)$. 
Since by \Cref{lem:reduce}, $\beta(D_M) =\beta(D)$, this would prove the correctness of the algorithm.

Note first that the value of $\mathsf{OPT}$ is always at least $\beta(D)$. 
Indeed, if $\mathcal{S}$ is a set of pairwise independent in-stars of $D$ and $\mathcal{S}$ admits a linear completion $L$ then by definition, $A(\mathcal{S}) \cup L$ is a branching of $D$ and thus, $|U(A(\mathcal{S}) \cup L)| \geq \beta(D)$. 
Let us next show that $\mathsf{OPT}$ is at some point set to $\beta(D)$. 
By construction, $N = \min D$ is the unique maximum antichain of $D$ (indeed, if there were a maximum antichain $N' \neq N$ in $D$, then the total weight of $N'$ would be strictly larger than that of $N$, a contradiction to the definition of $N$). 
Thus, by \Cref{lem:inter}, there exists a maximal optimal branching $B$ of $D$ whose set of leaves is exactly $N$. 
 We claim that $\mathcal{S}_B = \{(c_i,x^i_1,\ldots,x^i_{p_i})\colon i \in [\ell]\}$ is a set of pairwise independent in-stars of $D$ such that
\begin{itemize}
\item[(i)] $1 \leq \, \ell \leq \wdt(D)$, and
\item[(ii)] for every $i \in [\ell]$, $p_i \leq \wdt(D)$.
\end{itemize}
Let us first show that $\ell \ge 1$.
If $\ell = 0$, then $\mathcal{S}_B = \emptyset$, that is, $B$ is a linear branching. 
Then, $|N| =  |\max D_M|$, which, by maximality of the weight of $N$, implies that $N = \max D_M$, a contradiction to our assumption.
Hence, $\ell\ge 1$, as claimed.
Next, we show that $\ell\le  \wdt(D)$.
Denote by $C = \{c_i\colon i \in [\ell]\}$ and by $X = \{x^i_j\colon  i \in [\ell] \text{ and } j \in [p_i]\}$. Since $B$ is maximal, every vertex in $V(D) \setminus \max D$ has out-degree one in $B$ and hence,
\[
\sum_{v \in V(D)} d^+_B(v) = |V(D)| - |\max D|.
\]
On the other hand, every vertex in $V(D) \setminus (C \cup \min D)$ has in-degree one by definition and so,
\begin{equation*}
\begin{split}
\sum_{v \in V(D)} d^-_B(v) &= \lvert V(D) \setminus (C \cup \min D)\rvert + \sum_{i \in [\ell]} p_i = \lvert V(D)\rvert - \left( \ell + |\min D|\right) + \sum_{i \in [\ell]} p_i \\
&= \lvert V(D)\rvert - \left(\ell + \wdt(D)\right) + \sum_{i \in [\ell]} p_i \,.
\end{split}
\end{equation*}
Since $\sum_{v \in V(D)} d^+_B(v) = \sum_{v \in V(D)} d^-_B(v)$, it follows that
\begin{equation}\label{eq}
\wdt(D) = \lvert \max D \rvert - \ell + \sum_{i \in [\ell]} p_i.
\end{equation}
Now by definition, $p_i \geq 2$ for every $i \in [\ell]$ and so, by \eqref{eq},
\[
\wdt(D) \geq \lvert \max D\rvert - \ell + 2\ell = |\max D| + \ell.
\]
Thus, item (i) holds. 
Now if there exists $i \in [\ell]$ such that $p_i > \wdt(D)$, then by \eqref{eq},
\begin{equation*}
\begin{split}
p_i &> \lvert \max D\rvert - \ell + p_i + \sum_{j \neq i} p_j
\geq \lvert \max D\rvert - \ell + p_i + 2(\ell -1)
\end{split}
\end{equation*}
and so, $\ell + \lvert \max D \rvert  < 2$, a contradiction as $\ell \geq 1$ by (i) and $\max D \neq \emptyset$. 
Thus, item (ii) holds true as well. Since any two in-stars in $\mathcal{S}_B$ are independent by construction, it follows that at some point in the run of the algorithm, the set $\mathcal{S}_B$ is considered in Step 4. 
Now by \Cref{lem:forall}(i), $\mathcal{S}_B$ admits a linear completion; and since for any linear completion $L$ of $\mathcal{S}_B$, $\lvert U(B)\rvert  = \lvert U\left(A(\mathcal{S}_B) \cup L\right) \rvert $ by \Cref{lem:forall}(ii), the variable $\mathsf{OPT}$ is set then to $\lvert U(B) \rvert = \beta(D)$ in Step 4.1.\\

\noindent
\textbf{Time complexity.} 
Step 1 can be done in time $\mathcal{O}(n^2m)$.
Next, let us show that Step 2 can be done in time $\mathcal{O}(n^{2+o(1)})$.
Let \hbox{$Q = 1 + \sum_{v \in V(D_M)} w(v)$} and let $w'$ be the weight function on $V(D_M)$ defined as follows: for every $v \in V(D_M)$, let $w'(v) = Q + w(v)$. 
Note that the weight function $w'$ can be computed in time $\mathcal{O}(n)$.

We claim that every maximum weight antichain of $D_M$ with respect to~$w'$ is a maximum antichain of $D_M$ of maximum total weight with respect to~$w$.
Indeed, let $N$ be a maximum antichain of $D_M$ of maximum total weight with respect to~$w$ and let $N'$ be a maximum weight antichain of $D_M$ with respect to~$w'$. 
Then $N'$ contains at least $\wdt(D_M)$ vertices, for otherwise 
\begin{equation*}
\begin{split}
w'(N') = \sum_{v \in N'} w'(v) \leq & (\wdt(D_M)-1)\cdot Q + \sum_{v \in N'} w(v) \\
< & (\wdt(D_M)-1)\cdot Q + Q  \leq w'(N)\,,
\end{split}
\end{equation*}
a contradiction to the maximality of $N'$.
Since any antichain in $D_M$ contains at most $\wdt(D_M)$ vertices, it follows that $|N'| = \wdt(D_M)$ and thus, $N'$ is a maximum antichain of $D_M$; in particular, \[\sum_{v \in N'} w(v) \leq \sum_{v \in N} w(v)\] by maximality of $N$. 
This implies that 
\begin{equation*}
\begin{split}
w'(N') = &\wdt(D_M) \cdot Q + \sum_{v \in N'} w(v) \\
\leq & \wdt(D_M) \cdot Q + \sum_{v \in N} w(v) = w'(N) \leq w'(N')\,,
\end{split}
\end{equation*}
and so, $w(N') =  w(N)$, as claimed. 
Now a maximum weight antichain in $D_M$ with respect to~$w'$ can be computed in time $\mathcal{O}(|A(D_M)|^{1+o(1)})$, which is in $\mathcal{O}(n^{2+o(1)})$, by \Cref{thm:max-weight-antichain}.

Step 3 can be done in time $\mathcal{O}(|V(D_M)|+|A(D_M)|) = \mathcal{O}(n^2)$.

Now each set considered in Step 4 contains at most $\wdt(D) \cdot (\wdt(D) +1)$ vertices of $D$ and so, there are at most $n^{\wdt(D) \cdot (\wdt(D) +1)}$ such sets. 
For any set $\mathcal{S}$ of in-stars of size at most $k\le \wdt(D)$, checking whether the in-stars in $\mathcal{S}$ are pairwise independent can be done in time $\mathcal{O}(k^2|\mathcal{S}|^2) = \mathcal{O}(\wdt(D)^{4})$. 
Since deciding whether a set of pairwise independent in-stars admits a linear completion and computing such a linear completion if there is one can be done in time $\mathcal{O}(n^{2+o(1)})$ by \Cref{lem:polylincomp}. Thus, Step~4 runs in time  $\mathcal{O}(n^{\wdt(D) \cdot (\wdt(D) +1)}\cdot (\wdt(D)^{4}+ n^{2+o(1)}))$.

In conclusion, the above algorithm runs in time $\mathcal{O}(n^{\wdt(D) \cdot (\wdt(D) +1)}\cdot (\wdt(D)^{4}+ n^{2+o(1)}) + n^2m)$.
Since $\wdt(D)=\wdt(D_M)$ by construction, the theorem follows.
\end{proof}

Let us remark that the above algorithm in fact computes an optimal branching of $D$ whose set of leaves is $N$. 
This branching can then be extended to an optimal branching of $D_M$ by computing an optimal linear branching of $D[N \cup V^-_N]$ using the corresponding algorithm from~\cite{PPVB} (or \Cref{thrm:wdtmaxel}).

\section{Conclusion}\label{sec:conclusion}

\begin{sloppypar}
In this paper, we studied an \textsf{NP}-hard optimization problem, the \textsc{Minimum Uncovering Branching} (MUB) problem. 
We presented two sufficient conditions for polynomial-time solvability, showing in particular that the problem is polynomial-time solvable on instances of bounded width.
This result, along with the only previously known efficiently solvable case from~\cite{CompAndAlg} dealing with the case when the out-neighborhood of each vertex of the containment digraph has width at most one, motivates the question of whether a common generalization of these two results might be possible, namely to the case when the out-neighborhood of each vertex of the containment digraph has bounded width.
However, this is not the case unless \textsf{P} = \textsf{NP}: it follows from the hardness reduction used in the proof of~\cite[Proposition 4.2]{PPVB} that the problem is \textsf{APX}-hard on instances for which each vertex of the containment digraph has width at most two.
\end{sloppypar}

Nevertheless, our work leaves open several questions and suggests several directions for future research, including the following.
\begin{enumerate}
	\item[i)] Identification of further polynomially solvable cases of the MUB problem.
	\item[ii)] Does the problem admit a constant factor approximation algorithm?
  \item[iii)] Our main result is an \textsf{XP} algorithm for the problem parameterized by the width of the binary matrix.
  Is the problem fixed-parameter tractable with respect to this parameter? 
	\item[iv)] Studying further extensions of the model that could be particularly relevant for the biological application, for instance, when the input binary matrix has partially missing data or the data may contain errors.
\end{enumerate}

\paragraph{Acknowledgments.}
The authors are grateful to the anonymous reviewers for their valuable suggestions and to Ekkehard K\"ohler for helpful discussions, in particular, for pointing out reference~\cite{Mohring}.
This work is supported in part by the Slovenian Research and Innovation Agency (I0-0035, research program P1-0285 and research projects J1-3003, J1-4008, J1-4084, J1-60012, and N1-0370) and by the research program CogniCom (0013103) at the University of Primorska.

\bibliographystyle{plain}
\bibliography{bibliography}

\end{document}